\lstdefinestyle{customJ}{
  belowcaptionskip=1\baselineskip,
  breaklines=true,
  frame=L,
  xleftmargin=\parindent,
  language=Java,
  showstringspaces=false,
  basicstyle=\footnotesize\ttfamily,
  keywordstyle=\bfseries\color{green!40!black},
  commentstyle=\itshape\color{purple!40!black},
  identifierstyle=\color{blue},
  stringstyle=\color{orange},
}
\tikzstyle{state}+=[minimum size = 6mm, inner sep=0,outer sep=1]
\tikzset{->,>=stealth'}
\algrenewcommand{\algorithmiccomment}[1]{\hskip1.5em \textbackslash *  #1 * \textbackslash}
\newcolumntype{L}{X}
\newcolumntype{R}{>{\raggedleft\arraybackslash}X}
\newcolumntype{C}{>{\centering\arraybackslash}X}
\DeclarePairedDelimiter{\delimabs}{\lvert}{\rvert}
\DeclarePairedDelimiter{\delimnorm}{\lVert}{\rVert}
\DeclarePairedDelimiter{\delimpospart}{\lgroup}{\rgroup^+}
\DeclarePairedDelimiterX{\deliminner}[2]{\lange}{\rangle}{#1, #2}
\DeclarePairedDelimiter{\delimcardinality}{\lvert}{\rvert}
\DeclarePairedDelimiter{\delimset}{\lbrace}{\rbrace}
\DeclarePairedDelimiter{\delimtuple}{(}{)}
\DeclarePairedDelimiter{\delimlistt}{[}{]}
\DeclarePairedDelimiter{\delimfun}{(}{)}
\NewDocumentCommand{\abs}{sm}{\IfBooleanTF{#1}{\delimabs{#2}}{\delimabs*{#2}}}
\NewDocumentCommand{\norm}{sm}{\IfBooleanTF{#1}{\delimnorm{#2}}{\delimnorm*{#2}}}
\NewDocumentCommand{\pospart}{sm}{\IfBooleanTF{#1}{\delimpospart{#2}}{\delimpospart*{#2}}}
\NewDocumentCommand{\negpart}{sm}{\IfBooleanTF{#1}{\delimnetpart{#2}}{\delimnetpart*{#2}}}
\NewDocumentCommand{\inner}{sm}{\IfBooleanTF{#1}{\deliminner{#2}}{\deliminner*{#2}}}
\NewDocumentCommand{\cardinality}{sm}{\IfBooleanTF{#1}{\delimcardinality{#2}}{\delimcardinality*{#2}}}
\NewDocumentCommand{\set}{sm}{\IfBooleanTF{#1}{\delimset*{#2}}{\delimset{#2}}}
\NewDocumentCommand{\tuple}{sm}{\IfBooleanTF{#1}{\delimtuple{#2}}{\delimtuple*{#2}}}
\NewDocumentCommand{\closure}{sm}{\IfBooleanTF{#1}{\delimclosure{#2}}{\delimclosure*{#2}}}
\NewDocumentCommand{\listt}{sm}{\IfBooleanTF{#1}{\delimlistt{#2}}{\delimlistt*{#2}}}
\NewDocumentCommand{\fun}{smm}{\IfBooleanTF{#1}{{#2}\delimfun{#3}}{{#2}\delimfun*{#3}}}
\NewDocumentCommand{\funMacro}{smm}{\IfNoValueTF{#3}{#1}{\fun{#2}{#3}}}
\DeclareMathOperator{\ExistsOp}{\exists}
\DeclareMathOperator{\ForallOp}{\forall}
\NewDocumentCommand{\Exists}{gg}{\IfNoValueTF{#1}{\ExistsOp}{\ExistsOp #1. \, #2}}
\NewDocumentCommand{\Forall}{gg}{\IfNoValueTF{#1}{\ForallOp}{\ForallOp #1. \, #2}}
\newcommand{\intersectionSym}{\cap}
\newcommand{\intersectionBin}{\mathbin{\intersectionSym}}
\newcommand{\UnionSym}{\bigcup}
\newcommand{\intersection}{\intersectionBin}
\newcommand{\Union}{\UnionSym}
\newcommand{\Naturals}{\mathbb{N}}
\newcommand{\Distributions}{\mathcal{D}}
\NewDocumentCommand{\convto}{G{}}{\xrightarrow{#1}}
\NewDocumentCommand{\weakto}{G{}}{\xrightharpoonup{#1}}
\NewDocumentCommand{\weakstarto}{G{}}{\xrightharpoonup[*]{#1}}
 \DeclareDocumentCommand{\diff}{D<>{} O{}  D(){}}{\Delta_{#1}^{#2}\ifthenelse{\isempty{#3}}{}{(#3)}}
\DeclareMathOperator*{\argmax}{arg\, max}
\DeclareMathOperator*{\argmin}{arg\, min}
\newcommand{\post}{\mathsf{Post}}
\newcommand{\eqdef}{\vcentcolon=}
\newcommand{\qee}{\hfill$\triangle$} 
\newcommand{\np}{\mathbf{NP}}
\newcommand{\conp}{\mathbf{co\text{-}NP}}
\NewDocumentCommand{\distributions}{d()}{\funMacro{\mathcal{D}}{#1}}
\DeclareDocumentCommand{\val}{D<>{} O{}  D(){} t'}{\mathsf{V}_{#1}^{\IfBooleanTF{#4}{\prime #2}{#2}}\ifthenelse{\isempty{#3}}{}{(#3)}}
\DeclareDocumentCommand{\ub}{D<>{} O{}  D(){} t'}{\mathsf{U}_{#1}^{\IfBooleanTF{#4}{\prime #2}{#2}}\ifthenelse{\isempty{#3}}{}{(#3)}}
\DeclareDocumentCommand{\gub}{D<>{} O{}  D(){} t'}{\mathsf{G}_{#1}^{\IfBooleanTF{#4}{\prime #2}{#2}}\ifthenelse{\isempty{#3}}{}{(#3)}}
\DeclareDocumentCommand{\lb}{D<>{} O{}  D(){} t'}{\mathsf{L}_{#1}^{\IfBooleanTF{#4}{\prime #2}{#2}}\ifthenelse{\isempty{#3}}{}{(#3)}}
\DeclareDocumentCommand{\game}{D<>{} O{} D(){} t'}{\mathsf{G}_{#1}^{\IfBooleanTF{#4}{\prime}{}#2}\ifthenelse{\isempty{#3}}{}{(#3)}}
\DeclareDocumentCommand{\transition}{D<>{} O{} D(){}}{\rightarrow_{#1}^{#2}\ifthenelse{\isempty{#3}}{}{(#3)}}
\newcommand{\SG}{\textrm{SG}}
\newcommand{\SGs}{\textrm{SGs}}
\DeclareDocumentCommand{\G}{D<>{} O{} t' D(){}}{\mathsf{G}_{#1}^{\IfBooleanTF{#3}{\prime}{}#2}\ifthenelse{\isempty{#4}}{}{(#4)}}
\DeclareDocumentCommand{\exGame}{D<>{} O{} t'  D(){}}{\mathsf{G}_{#1}^{\IfBooleanTF{#3}{\prime}{}#2}\ifthenelse{\isempty{#4}}{}{(#4)}=(\states<#1>[\IfBooleanTF{#3}{\prime}{}#2],\states<\Box\ifthenelse{\isempty{#1}}{}{,#1}>[\IfBooleanTF{#3}{\prime}{}#2],\states<\circ\ifthenelse{\isempty{#1}}{}{,#1}>[\IfBooleanTF{#3}{\prime}{}#2],\istate<#1>[\IfBooleanTF{#3}{\prime}{}#2],\actions<#1>[\IfBooleanTF{#3}{\prime}{}#2],\Av<#1>[\IfBooleanTF{#3}{\prime}{}#2],\trans<#1>[\IfBooleanTF{#3}{\prime}{}#2])}
\DeclareDocumentCommand{\states}{D<>{} O{}  t'}{\mathit{S}_{#1}^{\IfBooleanTF{#3}{\prime~#2}{#2}}}
\DeclareDocumentCommand{\state}{D<>{} O{}  t'}{\mathsf{s}_{#1}^{\IfBooleanTF{#3}{\prime #2}{#2}}}
\DeclareDocumentCommand{\istate}{D<>{} O{} t'}{\mathsf{s}_{0\ifthenelse{\isempty{#1}}{}{,#1}}^{\IfBooleanTF{#3}{\prime~#2}{#2}}}
\newcommand{\initstate}{\state<0>}
\DeclareDocumentCommand{\trans}{D<>{} O{} t' D(){} D(){}}{\delta_{#1}^{\IfBooleanTF{#3}{\prime}{}#2}\ifthenelse{\isempty{#4}}{}{(#4)}\ifthenelse{\isempty{#5}}{}{(#5)}}
\DeclareDocumentCommand{\Av}{D<>{} O{} t' D(){}}{\mathsf{Av}_{#1}^{\IfBooleanTF{#3}{\prime}{}#2}\ifthenelse{\isempty{#4}}{}{(#4)}}
\DeclareDocumentCommand{\F}{D<>{} O{} t' D(){}}{\mathsf{F}_{#1}^{\IfBooleanTF{#3}{\prime}{}#2}\ifthenelse{\isempty{#4}}{}{(#4)}}
\newcommand{\Path}{\rho}
\DeclareDocumentCommand{\Path}{D<>{} O{} t' D(){}}{\path<#1>[#2]\IfBooleanTF{#3}{'}{}(#4)}
\DeclareDocumentCommand{\path}{D<>{} O{} t' D(){}}{\rho_{#1}^{\IfBooleanTF{#3}{\prime}{}#2}\ifthenelse{\isempty{#4}}{}{(#4)}}
\newcommand{\fpath}{\mathsf{w}}
\DeclareDocumentCommand{\Paths}{D<>{} O{} t' D(){}}{\Omega_{#1}^{\IfBooleanTF{#3}{\prime}{}#2}\ifthenelse{\isempty{#4}}{}{(#4)}}
\newcommand{\straa}{\sigma}
\newcommand{\strab}{\tau}
\DeclareDocumentCommand{\strategy}{D<>{} O{} D(){}
  t*}{{\IfBooleanTF{#4}{\tau}{\sigma}}_{#1}^{#2}\ifthenelse{\isempty{#3}}{}{(#3)}}
\DeclareDocumentCommand{\actions}{D<>{} O{} t' d()}{{\IfNoValueTF{#4}{\mathsf{A}}{\fun{\mathsf{A}}{#4}}}_{#1}^{\IfBooleanTF{#3}{\prime~#2}{#2}}}
\DeclareDocumentCommand{\action}{D<>{} O{} t'}{\mathsf{a}_{#1}^{\IfBooleanTF{#3}{\prime#2}{#2}}}
\newcommand{\mec}{\mathsf{MEC}}
\newcommand{\pr}{\mathbb P}
\DeclareDocumentCommand{\target}{D<>{} O{}
  t'}{\mathfrak{1}_{#1}^{\IfBooleanTF{#3}{\prime}{}#2}}
\DeclareDocumentCommand{\fail}{D<>{} O{} t'}{\mathsf{\bot}_{#1}^{\IfBooleanTF{#3}{\prime}{}#2}}
\DeclareDocumentCommand{\CEC}{D<>{} O{} D(){}
  t*}{\IfBooleanTF{#4}{\simple~}{}\ifthenelse{\isempty{#3}}{}{#3\textit{-}}\textrm{BEC}_{#1}^{#2}}
\newcommand{\simple}{simple}
\NewDocumentCommand{\bE}{D<>{}D[]{}D(){}}{\mathsf{best\_exit}_{#1}^{#2}\ifthenelse{\isempty{#3}}{}{(#3)}}
\newcommand{\FIND}{{\textsf{FIND\_SEC}}}
\newcommand{\fstates}{\mathit{F}}
\newcommand{\drawcirc}{\node[draw,circle,minimum size=.7cm, outer sep=1pt]}
\newcommand{\drawbox}{\node[draw,rectangle,minimum size=.7cm, outer sep=1pt]}
\newcommand{\drawdummy}{\node[minimum size=0,inner sep=0]}
\newcommand{\sink}{\mathit{Z}}
\newcommand{\cnf}{Condon's normal form}
\newcommand{\twoSucc}{\mathrm{\textbf{2Act}}}
\newcommand{\halfProbs}{\mathrm{\pmb{\sfrac{1}{2}}\textbf{Probs}}}
\newcommand{\stopping}{\mathrm{\textbf{Stopping game}}}
\newcommand{\noOneSucc}{\mathrm{\textbf{No1Act}}}
\DeclareDocumentCommand{\actionb}{D<>{} O{} t'}{\mathsf{b}_{#1}^{\IfBooleanTF{#3}{\prime#2}{#2}}}
\newcommand{\zeroSink}{\mathrm{o}}
\newcommand{\SGinstance}{\mathbf{G}}
\newcommand{\exits}{E^T}
\newcommand{\valstra}{\mathsf{\val}_{\straa ,\strab}}
\definecolor{coolgrey}{rgb}{0.55, 0.57, 0.67}
\renewcommand{\algorithmiccomment}[1]{\textcolor{coolgrey!80!black}{\# #1}}
\renewcommand{\circ}{\bigcirc}
\newtheorem{definition}{Definition}
\newtheorem{example}{Example}
\newtheorem{lemma}{Lemma}
\newcommand{\ifarxivelse}[2]{\iftoggle{isArxiv}{#1}{#2}}
\title{Comparison of Algorithms for Simple Stochastic Games\ifarxivelse{ \\(Full Version)}{}
	\thanks{This research was funded in part by the German Research Foundation (DFG) projects 383882557  \emph{Statistical Unbounded Verification (SUV)} and 427755713 \emph{Group-By Objectives in Probabilistic Verification (GOPro)}.}}
	\author{Jan K\v ret\'insk\'y, Emanuel Ramneantu, Alexander Slivinskiy, Maximilian Weininger \institute{Technical University of Munich} \email{\{jan.kretinsky,emanuel.ramneantu,alexander.slivinskiy,maxi.weininger\}@tum.de}}
\begin{document}
	\maketitle

\begin{abstract}	
%

	Simple stochastic games are turn-based 2\textonehalf-player zero-sum graph games with a reachability objective. 
	The problem 
	is to compute the winning probability as well as the optimal strategies of both players. 
	In this paper, we compare the three known classes of algorithms -- value iteration, strategy iteration and quadratic programming -- both theoretically and practically.
	Further, we suggest several improvements for all algorithms, including the first approach based on quadratic programming that avoids transforming the stochastic game to a stopping one. 
	Our extensive experiments show that these improvements can lead to significant speed-ups.
	We implemented all algorithms in PRISM-games 3.0, thereby providing the first implementation of quadratic programming for solving simple stochastic games.
	
\end{abstract} 

\maketitle

\section{Introduction} \label{sec:intro}

\emph{Simple stochastic games} (SGs), e.g. \cite{condonComplexity}, are zero-sum 
games played on a graph by 
players Maximizer and Minimizer, who choose actions in their respective vertices (also called states).
Each action is associated with a probability distribution determining the next state to move to.
The objective of Maximizer is to maximize the probability of reaching a given target state; the objective of Minimizer is the opposite.

The basic decision problem is to determine whether Maximizer can ensure a reachability probability above a certain threshold if both players play optimally.
This problem is among the rare and intriguing combinatorial problems that are in $\np \cap \conp$~\cite{DBLP:conf/dimacs/Condon90}, but 
whether it belongs to $\mathbf{P}$ is a major and long-standing open problem.
Further, several other important problems can be reduced to SG, for instance parity games, mean-payoff games, discounted-payoff games and their stochastic extensions~\cite{DBLP:journals/corr/abs-1106-1232}.

Besides the theoretical interest, SGs are a standard model in control and verification of stochastic 
reactive systems~\cite{FV97,CH12}; see e.g. \cite{DBLP:journals/ejcon/SvorenovaK16} for an overview over various recent case studies.
Further, since Markov decision processes (MDP) \cite{Puterman} are a special case with only one player, SGs can serve as abstractions of large MDPs~\cite{DBLP:journals/fmsd/KattenbeltKNP10} or provide robust versions of MDPs when precise transition probabilities are not known~\cite{DBLP:conf/fossacs/ChatterjeeSH08,WMK19}.

There are three classes of algorithms for computing the reachability probability in simple stochastic games, as surveyed in~\cite{DBLP:conf/dimacs/Condon90}:
value iteration (VI), strategy iteration (SI, also known as policy iteration) and quadratic programming (QP).
In~\cite{DBLP:conf/dimacs/Condon90}, they all required the SG to be transformed into a certain normal form, among other properties 
ensuring that the game is stopping. This not only blows up the size of the SG, but also changes the reachability probability; however, it is possible to infer the original probability.
For VI and SI, this requirement has since been lifted, e.g.~\cite{visurvey,CAH13}, but not for QP.

While searching for a polynomial algorithm, there were several papers on VI and SI; however, the theoretical improvements so far are limited to subexponential variants~\cite{DBLP:journals/iandc/Ludwig95,DBLP:journals/algorithmica/DaiG11} and variants that are fast for SG with few random vertices~\cite{GH08,IM12}, i.e. games where most actions yield a successor deterministically.
QP was not looked at, as it is considered common knowledge that it performs badly in practice.

There exist several tools for solving games:
GAVS+~\cite{DBLP:conf/tacas/ChengKLB11} offers VI and SI for SGs, among other things. However, it is more for educational purposes than large case studies and currently not maintained.
GIST~\cite{DBLP:conf/cav/ChatterjeeHJR10} performs qualitative analysis of stochastic games with $\omega$-regular objectives.
For MDPs (games with a single player), we refer to~\cite{DBLP:conf/tacas/HahnHHKKKPQRS19} for an overview of existing tools.
Most importantly, PRISM-games~3.0~\cite{PRISM-games3} is a recent tool that offers algorithms for several classes of games. 
However,
for SGs with a reachability objective, it offers only variants of value iteration, none of which give a guarantee, so the result might be arbitrarily far off.
Thus, currently no tool offers a precise solution method for large simple stochastic games.

\textbf{Our contribution} is the following:
\begin{itemize}
	\item We provide an extension of the quadratic programming approach that does not require the transformation of the SG into a stopping SG in normal form.
	\item We propose several optimizations for the algorithms, inspired by~\cite{KM17}, including an extension of topological value iteration from MDPs~\cite{TVI1, ensure}.
	\item We implement 
	VI with guarantees on precision as well as SI, QP and our optimizations in PRISM-games 3.0---thereby providing the first implementations of QP for SGs---and experimentally compare them on both the realistic case studies of PRISM-games and on interesting handcrafted corner cases.
\end{itemize}

\subsection*{Related Work}

We sketch the recent developments of each class of algorithms 
and then several
further directions. 

Value iteration is a standard solution method also for MDPs~\cite{Puterman}. For a long time, the stopping criterion of VI was such that it could return arbitrarily imprecise results~\cite{BVI}. In principle, the computation has to run for an exponential number of steps~\cite{visurvey}. 
However, recently several heuristics that give guarantees and are typically fast were proposed for MDPs~\cite{atva,BVI,DBLP:conf/cav/QuatmannK18,OVI}, as well as for SGs~\cite{KKKW18}. 
A learning-based variant of VI for MDPs~\cite{atva} was also extended to SGs~\cite{KKKW18}.
The variant of VI from~\cite{IM12} requires the game to be in normal form, blowing up its size, and is good only if there are few random vertices, as it depends on the factorial of this number; it is impractical for almost all case studies considered in this paper.

Strategy iteration was introduced in~\cite{HK66}.
A randomized version was given in~\cite{DBLP:conf/dimacs/Condon90}, and subexponential versions in~\cite{DBLP:journals/iandc/Ludwig95,DBLP:journals/algorithmica/DaiG11}.
Another variant of SI was proposed in~\cite{DBLP:journals/entcs/Somla05}, however there is no evidence that it runs in polynomial time.
The idea of considering games with few random vertices (as in the already mentioned works of \cite{DBLP:journals/algorithmica/DaiG11,IM12}) was first introduced in~\cite{GH08}, where they exhaustively search a subspace of strategies, which is called f-strategies.

To the best of our knowledge, quadratic programming as solution method for simple stochastic games was not investigated further after the first mention in \cite{DBLP:conf/dimacs/Condon90}. 
However, convex QPs are solvable in polynomial time~\cite{kozlov1980polynomial}.
If it was possible to encode the problem in a convex QP of polynomial size, this would result in a polynomial algorithm.
The encoding we provide in this paper can however be exponential in the size of the game.

Further related works consider other variants of the model, for example concurrent stochastic games, see e.g. \cite{DBLP:journals/mst/HansenIM14} for the complexity of SI and VI and \cite{DBLP:conf/mfcs/ChatterjeeHI17} for strategy complexity, or games with limited information~\cite{DBLP:conf/cav/AshokKW19}.
Furthermore, one can consider other objectives, e.g. $\omega$-regular objectives~\cite{CH12}, mean payoff~\cite{DBLP:journals/tcs/ZwickP96} or combinations of objectives~\cite{DBLP:conf/mfcs/ChenFKSW13,DBLP:conf/lics/AshokCKWW20}.

\section{Preliminaries} \label{sec:prelim}
We now introduce the model of stochastic games, define the semantics by the standard means of infinite paths and strategies and then define the important concept of end components, which are subgraphs of stochastic games that are problematic for all three classes of algorithms.

A \emph{probability distribution} on a finite set $X$ is a mapping $\trans: X \to [0,1]$, such that $\sum_{x\in X} \trans(x) = 1$.
The set of all probability distributions on $X$ is denoted by $\Distributions(X)$.

\subsection{Stochastic Games}

%
%
Now we define stochastic games, in literature often referred to as simple stochastic games or turn-based stochastic two-player games with a reachability objective.
As opposed to the notation of e.g.\ \cite{condonComplexity}, we do not have special stochastic nodes, but rather a probabilistic transition function.

\begin{definition}[\SG]
	A \emph{stochastic game ($\SG$)} is a tuple 
	$(\states,\states<\Box>,\states<\circ>,\initstate,\actions,\Av,\delta)$ 
	where 
	\begin{itemize}
		\item 	$\states$ is a finite set of \emph{states} partitioned\footnote{I.e., $\states<\Box> \subseteq \states$, $\states<\circ> \subseteq \states$, $\states<\Box> \cup \states<\circ> = \states$, and $\states<\Box> \cap \states<\circ> = \emptyset$.}\ into the sets $\states<\Box>$ and $\states<\circ>$ of states of the player \emph{Maximizer} and \emph{Minimizer}\footnote{The names are chosen, because Maximizer maximizes the probability of reaching the given target states, and Minimizer minimizes it.} respectively
		\item $\initstate \in \states$ is the \emph{initial} state
		\item $\actions$ is a finite set of \emph{actions}
		\item $\Av: \states \to 2^{\actions}$ assigns to every state a set of \emph{available} actions
		\item $\trans: \states \times \actions \to \distributions(\states)$ is a \emph{transition function} that given a state $\state$ and an action $\action\in \Av(\state)$ yields a probability distribution over \emph{successor} states.
		We slightly abuse notation and write $\trans(\state,\action,\state')$ instead of $\trans(\state,\action)(\state')$.
	\end{itemize}
	See Figure \ref{fig:SGex} for an example of an SG.
	Since we consider the reachability objective, the $\SG$ is complemented with a set of target states $\fstates \subseteq \states$.
	What happens after reaching any target state is irrelevant for the reachability probability, so we can assume that every target state only has one action that is a self-loop with probability~1.
	A \emph{Markov decision process (MDP)} is a special case of $\SG$ where $\states<\circ> = \emptyset$ , and a Markov chain (MC) is a special case of an MDP, where for all $\state \in \states: \abs{\Av(\state)} = 1$. 
\end{definition}
Without loss of generality we assume that $\SGs$ are non-blocking, so for all states $\state$ we have $\Av(\state) \neq \emptyset$.
For a state $\state$ and an available action $\action \in \Av(\state)$, we denote the set of successors by $\post(\state,\action) := \set{\state' \mid \trans(\state,\action,\state') > 0}$.
Finally, for any set of states $T \subseteq \states$, we use $T_\Box$ and $T_\circ$ to denote the states in $T$ that belong to Maximizer and Minimizer, whose states are drawn in the figures as $\Box$ and $\circ$, respectively.


\begin{figure}[t]

\vspace{-1em}
\centering
\begin{tikzpicture}
\drawdummy (init) at (0,0) {};
\drawcirc (p) at (1,0) {$\mathsf{p}$};
\drawbox (q) at (3,0) {$\mathsf{q}$};
\drawdummy (mid) at (4,0) {};
\drawbox (1) at (6,0.5) {$\target$ };
\drawcirc (0) at (6,-0.5)  {$\mathrm{\zeroSink}$};

\draw[->] (init) to (p);
\draw[->]  (p) to[bend left] node [midway,anchor=south] {$\mathsf{a}$}(q) ;
\draw[->]  (q) to [bend left] node [midway,anchor=north] {$\mathsf{b}$} (p);
\draw[-] (q) to node [midway,anchor=south] {$\mathsf{c}$} (mid) ;
\draw[->] (mid) to node [below] {$\sfrac13$} (0);
\draw[->] (mid) to node [above] {$\sfrac13$} (1);
\draw[->] (mid) to [bend left,out=270,in=270] node[above] {$\sfrac13$} (q) ;
\draw[->]  (0) to[loop right]  node [midway,anchor=west] {$\mathsf{e}$} (0);
\draw[->]  (1) to [loop right] node [midway,anchor=west] {$\mathsf{d}$} (1) ;

\end{tikzpicture}
\caption{An example of an $\SG$ with $\states = \set{\mathsf{p},\mathsf{q},\target,\mathrm{\zeroSink}}$, $\states<\Box> = \set{\mathsf{q},\target}$, $\states<\circ> = \set{\mathsf{p},\mathrm{\zeroSink}}$, the initial state $\mathsf{p}$ and the set of actions $\actions = \set{\mathsf{a},\mathsf{b},\mathsf{c},\mathsf{d},\mathsf{e}}$; $\Av(\mathsf{p})=\set{\mathsf{a}}$ with $\trans(\mathsf{p},\mathsf{a})(\mathsf{q})=1$;  $\Av(\mathsf{q})=\set{\mathsf{b},\mathsf{c}}$ with $\trans(\mathsf{q},\mathsf{b})(\mathsf{p})=1$ and $\trans(\mathsf{q},\mathsf{c})(\mathsf{q}) = \trans(\mathsf{q},\mathsf{c})(\target) = \trans(\mathsf{q},\mathsf{c})(\mathrm{\zeroSink}) = \frac{1}{3}$. 
	For actions with only one successor, we do not depict the transition probability $1$.
}
\label{fig:SGex}
\vspace{-0.5em}
\end{figure}

\subsection{Semantics: Paths, Strategies and Values}\label{sec:semantics}
The semantics of SGs is given in the usual way by means of strategies, the induced Markov chain and the respective probability space, as follows:
An \emph{infinite path} $\path$ is an infinite sequence $\path = \state<0> \action<0> \state<1> \action<1> \dots \in (\states \times \actions)^\omega$, such that for every $i \in \Naturals$ we have $\action<i>\in \Av(\state<i>)$ and $\state<i+1> \in \post(\state<i>,\action<i>)$.
\emph{Finite path}s are defined analogously as elements of $(\states \times \actions)^\ast \times \states$.

As this paper deals with the reachability objective, we can restrict our attention to memoryless deterministic strategies, which are optimal for this objective~\cite{condonComplexity}.
A \emph{strategy} of Maximizer, respectively Minimizer, is a function $\straa: \states<\Box> \to \actions$, respectively $\states<\circ> \to \actions$, such that $\straa(\state) \in \Av(\state)$ for all $\state$.
A pair $(\straa,\strab)$ of memoryless deterministic strategies of Maximizer and Minimizer induces a Markov chain $\G[\straa,\strab]$, as for every state there is only one available action.
The Markov chain 
induces a unique probability distribution $\pr_{\state}^{\straa,\strab}$ over measurable sets of infinite paths \cite[Ch.~10]{BK08}. 

We write $\Diamond \fstates:=\set{\path \mid \path = \state<0> \action<0> \state<1> \action<1> \dots \in (\states \times \actions)^\omega \wedge \exists i \in \Naturals.~\state<i> \in \fstates}$ to denote the (measurable) set of all paths which eventually reach $\fstates$. 
For each $\state\in\states$, we define the \emph{value} in $\state$ as 
\[\val(\state) \eqdef \sup_{\straa} \inf_{\strab} \pr_{\state}^{\straa,\strab}(\Diamond \fstates)= \inf_{\strab} \sup_{\straa}\pr_{s}^{\straa,\strab}(\Diamond \fstates),\]

\noindent where the equality follows from~\cite{condonComplexity}.
The value is the least fixpoint of the so called \emph{Bellman equations}~\cite{DBLP:conf/dimacs/Condon90}:
\begin{equation}\label{eq:Vs}
\val(\state) =  \begin{cases} 
1 &\mbox{if } \state \in \fstates \\
\max_{\action \in \Av(\state)}\val(\state,\action)		&\mbox{if } \state \in \states<\Box> \setminus \fstates  \\
\min_{\action \in \Av(\state)}\val(\state,\action) &\mbox{if } \state \in \states<\circ> \setminus \fstates
\end{cases}\text{,}
\end{equation} 
\begin{eqnarray}\label{eq:Vsa}
\text{with~} \val(\state,\action) \eqdef \sum_{s' \in S} \trans(\state,\action,\state') \cdot \val(\state')
\end{eqnarray}

The states with no path to the target, so called sinks, are of special interest. We denote the set of sinks as $\sink$. Sinks have a value of 0 and can be found a priori by graph analysis.

We are interested not only in the values $\val(\state)$ for all $\state \in \states$, but also their $\varepsilon$-approximation, i.e. an approximation $\lb$ with $\abs{\val(\state) - \lb(\state)} < \varepsilon$; as well as the corresponding ($\varepsilon$-)optimal strategies for both players, i.e. a pair of strategies $(\straa,\strab)$ under which $\pr_{\state}^{\straa,\strab}(\Diamond \fstates)$ is equal to $\val(\state)$ (respectively an $\varepsilon$-approximation of $\val(\state)$).
Note that it suffices to have either the values or the optimal strategies, because from one we can infer the other.
Given a pair of optimal strategies $(\straa,\strab)$, the values can be computed by solving the induced Markov chain $\G[\straa,\strab]$.
Given a vector of values for all states, an optimal pair of strategies can be computed by randomizing over all locally optimal actions in each state (e.g. $\straa(\state)$ randomizes over the set $\argmax_{\action \in \Av(\state)} \val(\state,\action)$ for Maximizer states, and dually with $\argmin$ for Minimizer). To get a deterministic strategy, we cannot only pick some locally optimal action, but additionally we have to ensure that Maximizer is not stuck in some cycle when playing the actions. This works by looking at end components, which are the topic of the next subsection.

\subsection{End Components}
When computing the values of states in an SG, we need to take special care of \emph{end components} (EC).
Intuitively, an EC is a subset of states of the SG, where the game can remain forever;
i.e. given certain strategies of both players, there is no positive probability to exit the EC to some other state.
ECs correspond to bottom strongly connected components of the Markov chains induced by some pair of strategies.

\begin{definition}[EC]
	\label{def:EC}
	A non-empty set $T\subseteq \states$ of states is an \emph{end component (EC)} if there exists a non-empty set $B \subseteq \Union_{\state \in T} \Av(s)$ of actions\footnote{Note that this assumes that action names are unique. This can always be achieved by renaming actions, e.g. prepending every action with the state it is played from} such that 
	\begin{enumerate}
		\item for each $\state \in T, \action \in B \intersection \Av(\state)$ we have $\post(\state,\action) \subseteq T$,
		\item for each $\state, \state' \in T$ there is a finite path $\fpath = \state \action<0> \dots \action<n> \state' \in (T \times B)^* \times T$, i.e. the path stays inside $T$ and only uses actions in $B$.
	\end{enumerate}
An end component $T$ is a \emph{maximal end component (MEC)} if there is no other end component $T'$ such that $T \subseteq T'$.
\end{definition}
Given an $\SG$ $\G$, the set of its MECs is denoted by $\mec(\G)$ and can be computed in polynomial time~\cite{CY95}.
\begin{example}
	Consider the SG of Figure \ref{fig:SGex}. The set of states $T = \{\mathsf p,\mathsf q\}$ is an EC, as when playing only actions from $B = \{\mathsf a, \mathsf b\}$ the play remains in $T$ forever. It is even a MEC, as there is no superset of $T$ with this property. \qee
\end{example}

ECs are of special interest, because in ECs there can be multiple fixpoints of the Bellman equations (see Equation \ref{eq:Vs} and \ref{eq:Vsa}). 
Thus, methods relying on iterative over-approximation of the value do not converge, as they are stuck at some greater fixpoint than the value (cf. \cite[Lemma 1]{KKKW18} and Example \ref{ex:ubNoConvergeVI} in the next section).
This is why the original description of the algorithms~\cite{DBLP:conf/dimacs/Condon90} considered only stopping games, i.e. games without ECs, except for a dedicated target and sink state.
The algorithms are theoretically applicable to arbitrary SGs, as for every non-stopping SG one can construct a stopping SG and infer the original value from solving the stopping SG. See \ifarxivelse{Appendix~\ref{app:stopping}}{\cite[Appendix A.4]{techreport}} for a description of this approach and Section~\ref{sec:QP_stopping} for a discussion of the practical drawbacks. 

\section{State of the Art Algorithms}

In this section, we describe the existing algorithms for solving simple stochastic games.
All of them require as input an SG $\game = (\states,\states<\Box>,\states<\circ>,\initstate,\actions,\Av,\delta)$ and a target set $\fstates$. Value iteration additionally needs a precision~ $\varepsilon$.
After termination, all of them return a vector of values for each state ($\varepsilon$-precise for BVI) and the corresponding ($\varepsilon$-)optimal strategies. 
Quadratic programming in its current form only works on stopping SGs in a certain normal form.

\subsection{Bounded Value Iteration}\label{sec:bvi}
\emph{Value Iteration (VI)}, see e.g.~\cite{Puterman}, is the most common algorithm for solving MDPs and SGs, and the only method implemented in PRISM-games~\cite{PRISM-games3}.
Originally, VI only computed a convergent sequence of under-approximations; however, as it was unclear how to stop, results returned by model checkers could be off by arbitrary amounts~\cite{BVI}.
Thus, it was extended to also compute a convergent over-approximation~\cite{KKKW18}. The resulting algorithm is called \emph{bounded value iteration (BVI)}.

The basic idea of BVI is to start from a vector $\lb<0>$ respectively $\ub<0>$ that definitely is an under-/over-approximation of the value, i.e. for every state $\lb<0>(\state) \leq \val(\state) \leq \ub<0>(\state)$.
Then the algorithm repeatedly applies so called \emph{Bellman updates}, i.e. it uses a version of Equation \ref{eq:Vs} as follows (the equation for the over-approximation is obtained by replacing $\lb$ with $\ub$):
\begin{equation}\label{eq:Ls}
\lb<n>(\state) =  \begin{cases} 
\max_{\action \in \Av(\state)}\lb<n-1>(\state,\action)		&\mbox{if } \state \in \states<\Box>  \\
\min_{\action \in \Av(\state)}\lb<n-1>(\state,\action) &\mbox{if } \state \in \states<\circ>
\end{cases},
\end{equation}
where $\lb<n-1>(\state,\action)$ is computed from $\lb<n-1>(\state)$ as in Equation \ref{eq:Vsa}.
Since $\val$ is the least fixpoint of the Bellman equations, $\lim_{n \to \infty} \lb<n>$ converges to $\val$.
However, the over-approximation $\ub$ need not converge in the presence of ECs.

\begin{example}\label{ex:ubNoConvergeVI}
	Consider the SG of Figure \ref{fig:SGex} with the EC $\{\mathsf p, \mathsf q\}$. Let $\ub<0> = 1$ for $\mathsf p$, $\mathsf q$ and $\target$ and $\ub<0> = 0$ for $\mathrm{\zeroSink}$.
	Then we have $\ub<0>(\mathsf q, \mathsf b) = 1$ and $\ub<0>(\mathsf q, \mathsf c) = \sfrac 2 3$. Thus, $\mathsf q$ will pick action $\mathsf b$, as it promises a higher value, and the over-approximation does not change.
	This happens, because looking at the current upper bound, Maximizer is under the impression that staying in the EC yields a higher value. However, it actually reduces the probability to reach the target to 0.
	So the algorithm has to perform an additional step to inform states in an EC that they should not depend on each other, but on the best exit.
	In this case, $\ub<1>(\mathsf q) = \ub<0>(\mathsf q, \mathsf c) = \sfrac 2 3$. \qee
\end{example}

In fact, it does not suffice to only decrease the upper bound of ECs, but a more in-depth graph analysis is required to detect the problematic subsets of states, so called \emph{simple end components}~(SEC)~\cite{KKKW18}.
Decreasing the value of those SECs to their best exit repeatedly results in a convergent over-approximation. 
\vspace{-1em}
\[\displaystyle \bE<\ub>(T) = \max_{\substack{\state \in T_\Box\\ \neg \post(\state,\action) \subseteq T}} \ub(\state,\action)\] 
is the best exit according to the current estimates of the upper bound.
Algorithm \ref{alg:bvi} shows the full BVI algorithm from \cite{KKKW18}.

\begin{algorithm}[htbp]
	
	\caption{Bounded value iteration algorithm from \cite{KKKW18}}\label{alg:bvi}
	\begin{algorithmic}[1]
		\Procedure{BVI}{precision $\varepsilon>0$}
		\For {$\state \in \states$}\hspace{2.05em}\Comment{Initialization}
		\State $\lb(\state)=0$ \hspace{1.65em}\Comment{Lower bound}
		\State $\ub(\state)=1$ \hspace{1.5em}\Comment{Upper bound}
		\EndFor
		\State \textbf{for} $\state \in \fstates$ \textbf{do} $\lb(\state) = 1$ \hspace{1.5em}\Comment{Value of targets is determined a priori}
		
		\medskip
		
		\Repeat
		\State $\lb,\ub$ get updated according to Eq.~(\ref{eq:Ls})\hspace{1.5em} \Comment{Bellman updates}
		\smallskip
		\For{$T \in \FIND$}\hspace{5.1em}\Comment{For every SEC}
		\For {$\state \in T$} 
		\State $\displaystyle \ub(\state) \gets  \bE<\ub>(T)$\label{line:adjust}\hspace{1.75em} \Comment{Decrease upper bound to best exit}
		\EndFor
		\EndFor
		\Until{$\ub(\state) - \lb(\state) < \varepsilon$ for all $\state \in \states$}\hspace{1.5em}
		\Comment{Guaranteed error bound}
		\EndProcedure
	\end{algorithmic}
\end{algorithm}

There also is a simulation based asynchronous version of BVI (see \cite[Section 4.4]{KKKW18}) which can perform very well on models with a certain structure~\cite{cores}.
It updates states encountered by simulations, and guides those simulations to explore only the relevant part of the state space. If only few states are relevant for convergence, this algorithm is fast; if large parts of the state space are relevant or there are many cycles in the game graph that slow the simulations down, this adaption of BVI is slow.

\subsection{Strategy Iteration}

In contrast to value iteration, the approach of \emph{strategy iteration (SI)}~\cite{HK66} does not compute a sequence of value-vectors, but instead a sequence of strategies.
Starting from an arbitrary strategy of Maximizer, we repeatedly compute the best response of Minimizer and then greedily improve Maximizer's strategy. The resulting sequence of Maximizer strategies is monotonic and converges to the optimal strategy~\cite[Theorem 3]{CAH13}.
The pseudocode for strategy iteration is given in Algorithm \ref{alg:si}.

Note that in non-stopping SGs (games with ECs) the initial Maximizer strategy cannot be completely arbitrary, but it has to be \emph{proper}, i.e. ensure that either a target or a sink state is reached almost surely; it must not stay in some EC, as otherwise the algorithm might not converge to the optimum due to problems similar to those described in Example \ref{ex:ubNoConvergeVI}.
In Algorithm \ref{alg:si}, we use the construction of the \emph{attractor strategy}~\cite[Section 5.3]{CAH13} to ensure that our initial guess is a proper strategy (Line \ref{line:si_attractor}). It first analyses the game graph to find the sink states $\sink$.
Then, it performs a backwards breadth first search, starting from the set of both target and sink states. 
A state discovered in the $i$-th iteration of the search has to choose some action that reaches a state discovered in the ($i$-1)-th iteration with positive probability. Such a state exists by construction, and the choice ensures that the initial strategy reaches the set of target or sink states almost surely.

When a Maximizer strategy is fixed, the main loop of Algorithm \ref{alg:si} solves the induced MDP $\game[\straa]$ (Line~\ref{line:si_mdp}). 
It need not remember the Minimizer strategy, but only uses the computed value estimates $\lb$ to greedily update Maximizer's strategy (Line~\ref{line:si_update}); note that here $\lb(\state,\action)$ is again computed from $\lb(\state)$ as in Equation~\ref{eq:Vsa}.
The algorithm stops when the Maximizer strategy does not change any more in one iteration. We can then compute the values and the corresponding Minimizer strategy by solving the induced MDP~ $\game[\straa]$.

\begin{algorithm}[htbp]
	
	\caption{Strategy iteration}\label{alg:si}
	\begin{algorithmic}[1]
		\Procedure{SI}{}
		\State $\straa' \gets$ arbitrary Maximizer \emph{attractor strategy}\hspace{1.5em} \Comment{Proper initial strategy} \label{line:si_attractor}
		\Repeat
		\State $\straa \gets \straa'$
		\For {$\state \in \states$}
		\State $\lb(\state) \gets \inf_{\strab} \pr_{s}^{\straa,\strab}(\Diamond \fstates)$ \hspace{4.85em} \Comment{Solve MDP for opponent} \label{line:si_mdp}
		\EndFor
		\For {$\state \in \states<\Box>$}
		\State $\straa'(\state) \gets \argmax_{\action \in \Av(\state)} \lb(\state,\action) $ \hspace{1.5em}\Comment{Greedily optimise choices} \label{line:si_update}
		\EndFor
		\Until{$\straa = \straa'$}
		\EndProcedure
	\end{algorithmic}
\end{algorithm}

\subsection{Quadratic Programming}

\emph{Quadratic programming (QP)} ~\cite{DBLP:conf/dimacs/Condon90} works by encoding the graph of the SG in a system of constraints. The only global (and local) optimum of the objective function under these constraints is 0, and it is attained if and only if the variable for every state is set to the value of that state~\cite[Section 3.1]{DBLP:conf/dimacs/Condon90}.
The proof as well as the construction of the quadratic program relies on the game being in a certain normal form, which consists of four conditions:
\begin{itemize}
	\item $\twoSucc$: For all $\state \in \states : |\Av(\state)| \leq 2$.
	\item $\noOneSucc$: If $|\Av(\state)| = 1$, then $\state$ is a target or a sink.
	\item $\halfProbs$: For all $\state, \state' \in \states , \action \in \Av(\state): \trans(\state,\action,\state') \in \{0, 0.5, 1\}$.
	\item $\stopping$: There are no ECs in $\game$ (except for the sinks and targets).
\end{itemize}
We shortly discuss the advantage of each condition of the normal form:
the reason for $\twoSucc$ and $\noOneSucc$ is that the objective function of the QP requires every (non-sink and non-target) state to have exactly 2 successors. Arguing about a game with average nodes instead of actions mapping to arbitrary probability distribution simplified the proofs, which is the advantage of $\halfProbs$. 
$\stopping$ was necessary, because of the problem of non-convergence in end components, as in Example \ref{ex:ubNoConvergeVI}.
We recall the procedure from \cite{condonComplexity} to transfer an arbitrary SG into a polynomially larger SG in normal form in \ifarxivelse{Appendix~\ref{app:cnf}}{the full version~\cite[Appendix A]{techreport}}.

We now state the quadratic program for an SG in normal form as given in \cite{DBLP:conf/dimacs/Condon90}, but adjusted to our notation.
Intuitively, the objective function is 0 if all summands are 0. And the summand for some state $\state$ is 0 if its value $\val(\state)$ is equal to the value of one of its actions $\val(\state,\action)$ or $\val(\state,\mathsf b)$. The program is quadratic, since all states (except targets and sinks) have exactly two successors and hence the summand for every state is a quadratic term.
The constraints encode the game, ensuring that Maximizer/Minimizer states use the action with the highest/lowest value and fixing the values of targets and sinks.
Note the additional definition of $\val(\state,\action)$, which assumes that all occurring non-trivial probabilities are $\sfrac{1}{2}$.

\begin{equation*}
\begin{array}{ll@{}ll}
\text{minimize}  & \displaystyle\sum\limits_
{\substack{\state \in \states \\ \Av(\state) = \{\action, \actionb\}}}
(\val(\state) - \val(\state,\action)) (\val(\state) - \val(\state,\actionb))&&\\
\text{subject to}& 
\displaystyle\val(\state) \geq \ \val(\state,\action) &\forall \state \in \states<\Box>: |\Av(\state)| = 2, \forall \action \in \Av(\state) &\\
& \displaystyle\val(\state) \leq \ \val(\state,\action)  &\forall \state \in \states<\circ>: |\Av(\state)| = 2, \forall \action \in \Av(\state) &\\
& \displaystyle\val(\state) = 1  &\forall \state \in \fstates &\\
& \displaystyle\val(\state) = 0  &\forall \state \in \sink &\\
\end{array}
\end{equation*}
\begin{equation*}
\text{where}\; \; \val(\state,\action) = \left\{\begin{array}{ll}
\val(\mathsf{s} ') & \text{for } |\post(\state, \action)| = \{\mathsf{s} ' \}\\
\sfrac{1}{2} \val(\mathsf{s} ') + \sfrac{1}{2} \val(\mathsf{s} '') & \text{for } |\post(\state, \action)| = \{\mathsf{s} ', \mathsf{s} '' \}
\end{array}\right\}
\end{equation*}

\section{Improvements}

In this section, we first generalize QP to be applicable to arbitrary SGs, thereby omitting the costly transformations into the normal form.
Then, we identify a hyperparameter of SI.
Finally, we provide two optimizations that are applicable to all three algorithms.

\subsection{Quadratic Programming for General Stochastic Games}\label{sec:generalQP}

Every transformation into the normal form (see \ifarxivelse{Appendix \ref{app:cnf}}{\cite[Appendix A]{techreport}}) adds additional states or actions to the $\SG$.
Firstly, we want to change the constraints of the QP so that it can deal with arbitrary SGs; secondly, we want to avoid blowing up the SG, as the time for solving the QP depends on the size of the SG.

\subsubsection{$\twoSucc$}
In order to drop the constraint that every state has at most two actions, we can generalize the summand of a state $\state$ in the objective function to $\prod_{\action \in \Av(\state)} (\val(\state) - \val(\state,\action))$.
The resulting program is no longer quadratic, as for a state with $n$ actions now the objective function has order $n$.
Thus, in the experiments, we report the verification times of both (i) a higher-order optimization problem for the original SG as well as (ii) a QP for the SG that was transformed to comply with $\twoSucc$.

One more step is needed to ensure that the objective function still is correct: Recall that we want the only global optimum of the objective function to be 0, and it should be attained if and only if $\val(\state) = \val(\state,\action)$ for some action $\action \in \Av(\state)$. However, if a Minimizer state has an odd number of actions, its summand in the objective function could be negative.
For example, a Minimizer state with three actions could choose its value to be smaller than all three actions. Multiplying three negative numbers results in a negative number, which is preferred to a summand of 0, since we minimize the objective function.
Thus, for a state with an odd number of successors, we duplicate the term for one of the actions.
Then the summand for every state is non-negative.
For a Maximizer state, all factors are greater or equal to 0, since by the first constraint of the QP we have $\val(\state) \geq \val(\state,\action)$ for every action. For a Minimizer state, either one of the factors is 0 or all factors are negative (by the second constraint of the QP), and multiplying an even number of negative numbers results in a positive number. 
Thus, 0 is the only global optimum of the objective function in the constrained region.

\subsubsection{$\noOneSucc$}
The $\noOneSucc$-constraint compels every state $\state \in \states$ that is not a target or a sink to have more than one action.
The transformation for complying with this constraint (see \ifarxivelse{Appendix \ref{app:no1act}}{\cite[Appendix A.2]{techreport}}) adds a second action to every state with only one action; however, the newly added action is chosen in such a way that it does not influence the value of the state, and can thus also be omitted (see \ifarxivelse{Appendix \ref{app:proofNo1Act}}{\cite[Appendix B.1]{techreport}} for the formal argument).
For a non-absorbing state $\state$ with only one action $\action$, we can simplify the program by not including it in the objective function, but only adding a single constraint $\val(\state) = \val(\state,\action)$.

\subsubsection{$\halfProbs$}
To comply with the $\halfProbs$-requirement, every transition $\trans(\state, \action, \state')$ with $\state, \state' \in \states$ and $\action \in \Av(\state)$ must have a probability of 0, 0.5 or 1. 
We can omit this constraint if we use the general definition of $\val(\state,\action)$ as in Equation \ref{eq:Vsa}, summing the successors with their respective given transition probability (see \ifarxivelse{Appendix \ref{app:halfProbs}}{\cite[Appendix B.2]{techreport}} for the formal argument).

\subsubsection{Stopping Game}\label{sec:QP_stopping}
The final and most complicated constraint of the normal form is that we require the $\SG$ to be stopping.
The transformation of an arbitrary game into a stopping one (see \ifarxivelse{Appendix \ref{app:stopping}}{\cite[Appendix A.4]{techreport}}) adds a transition to a sink with a small probability $\varepsilon$ to every action, thus ensuring that a sink (or a target) is reached almost surely. 
This is problematic not only because the added transitions blow up the quadratic program, but even more so because of the following:


The $\varepsilon$-transitions modify the value of the game. Theoretically, this is no problem, because the value is rational and we know the greatest possible denominator $q$ it can have. Thus, by choosing $\varepsilon$ sufficiently small, we ensure that the modified value does not differ by more than $\sfrac{1}{q}$ and we can obtain the original value by rounding.
However, practically, this denominator becomes smaller than machine precision even for small systems, resulting in immense numerical errors.
The $\varepsilon$ has to be strictly smaller%
\footnote{It actually has to be a lot smaller, since the \emph{value} of every state may differ by at most that amount, but this conservative upper bound suffices to prove our point.} 
than
 $(\sfrac{1}{4})^{\abs{\states}}$~\cite{condonComplexity}.
According to IEE 754.2019 standard\footnote{\url{https://standards.ieee.org/content/ieee-standards/en/standard/754-2019.html}}, the commonly used double machine precision is $10^{-16}$, so already for 27 states the necessary $\varepsilon$ becomes smaller than machine precision.
Thus, the transformation to a stopping game is inherently impractical.

%

\medskip

Our approach introduces additional constraints for every maximal end component (MEC) to ensure that the QP finds the correct solution.

For MECs where all states belong to the same player, the solution is straightforward.
All states in MECs with only Minimizer states have a value of 0, as they can choose to remain forever in the EC and not reach the target; they can be identified a priori and are part of the set of sinks $\sink$.
All states in MECs with only Maximizer states have the value of the best exit from that MEC~\cite{KKKW18}. Thus, for all $T \in \mec(\game)$ with $T \cap \states<\circ> = \emptyset$ we can introduce an additional constraint: $\forall \state \in T: \val(\state) = \bE<\val>(T)$, where $\bE$ is defined as for BVI (see Section \ref{sec:bvi}).
Note that max-constraints are expressed through continuous and boolean variables and therefore, the resulting quadratic program is a mixed-integer program.

For MECs containing states of both players, the values of the states depend on the best exit that Maximizer can ensure reaching against the optimal strategy of Minimizer. 
We cannot just set the value to the best exit of the whole MEC, as Minimizer might prevent some states in the MEC from reaching that best exit.
The solution of~\cite{KKKW18} to analyse the graph and figure out Minimizer's decisions on the fly is not possible, because we have to give the constraints a priori.

We solve the problem as follows: iterate over all strategy-pairs $(\straa,\strab)$ in the MEC and for each pair describe the corresponding value of every state $\valstra(\state)$ depending on the values of the exiting actions  $\exits = \{(\state,\action) \mid \state \in T \wedge \post(\state,\action) \nsubseteq T \}$.
Then constrain the value for all states in the MEC according to the optimal strategies, i.e. $\val(\state) = \max\limits_{\straa}\min\limits_{\strab}\valstra(\state)$.
This ensures that the value of every state is set to the best exit it can reach, because the optimal strategies are chosen.

It remains to define how to describe $\valstra$ depending on the exits $\exits$. 
For a pair of strategies $(\straa,\strab)$ in the MEC, we consider the induced Markov chain $\game[\straa,\strab]$. 
We modify the MC and let every state-action pair $(\state,\action) \in \exits$ lead to a sink state $e_{(\state,\action)}$. 
Then, for every such sink state, we compute the probability $p^{\straa,\strab}_{e_{(\state,\action)}}(t)$ to reach it from every state $t \in T$ by solving the MC.
Then we set $\displaystyle \valstra(t) = \sum_{(\state,\action) \in \exits} p^{\straa,\strab}_{e_{(\state,\action)}}(t) \cdot \val(\state,\action)$.
We summarize the procedure we just described in Algorithm \ref{alg:qp}.
We use \emph{all strategies on $T$} to denote every possible mapping that maps every state $t \in T$ to some available action $a \in \Av(t)$.

\begin{algorithm}[htbp]
	
	\caption{Algorithm to add constraints for MECs containing states of both players}\label{alg:qp}
	\begin{algorithmic}[1]
		\Procedure{ADD\_MEC\_CONSTRAINTS}{MEC $T \subseteq S$}
		\For {all strategies $\straa$ on $T_\Box$}
			\For {all strategies $\strab$ on $T_\circ$}
				\For {every $(s,a) \in \exits$}
					\State Compute $p^{\straa,\strab}_{e_{(\state,\action)}}(t)$ for all $t \in T$ by solving the modified induced MC $\game[\straa,\strab]$
				\EndFor
			\EndFor
		\EndFor
		
		\For {every $t \in T$}
			\State Add constraint: $\val(t) = \max\limits_{\straa}\min\limits_{\strab} \sum_{(\state,\action) \in \exits} p^{\straa,\strab}_{e_{(\state,\action)}}(t) \cdot \val(\state,\action)$
		\EndFor
		\EndProcedure
	\end{algorithmic}
\end{algorithm}

Note that $\val(\state,\action)$ is defined as usual (see Equation \ref{eq:Vsa}). As by definition of being an exit it depends on some successor $\state' \notin T$, the states in the MEC cannot depend only on each other any more, but they have to depend on exiting actions. This ensures there is a unique solution.
See \ifarxivelse{Appendix \ref{app:proofMEC}}{\cite[Appendix B.3]{techreport}} for the formal proof.

For a MEC of size $n$ we have to examine at most $n^{(\max_{\state \in T} \abs{\Av(\state)})}$ pairs of strategies, because it suffices to consider memoryless deterministic strategies. 
Since the $\min$ and $\max$ constraints have to be explicitly encoded, we have to add a number of constraints that is exponential in the size of the MEC. 

\medskip

In summary, we have shown how to replace every condition of the normal form by modifying the constraints and objective function of the program.
The modifications always ensure that the program still computes the correct value, because it still only has a single global optimum in the constrained region, namely if every state variable is set to its value. 
Thus, we can provide a higher-order program to solve arbitrary SGs, and a quadratic program to solve SGs that only have to satisfy the $\twoSucc$ condition.

\subsection{Opponent Strategy for Strategy Iteration}

We can tune the MDP solution method that is used to compute the opponent strategy in Line \ref{line:si_mdp} of Algorithm \ref{alg:si}.
We need to ensure that we fix the new choices of Maximizer correctly.
For this, we can use the precise MDP solution methods strategy iteration or linear programming (LP, a QP with an objective function of order 1). 
However, we do not need the precise solution of the induced MDP, but it suffices to know that an action is better than all others.
So we can also use bounded value iteration and check that the lower bound of one action is larger than the upper bound of all other actions, and thus we can stop the algorithm earlier. 
This approximation and the fact that VI tends to be the fastest methods in MDPs can speed up the solving.
Using unguaranteed VI is dangerous, as it might stop too early and return a wrong strategy.

\subsection{Warm Start}
All three solution methods can benefit from prior knowledge.
VI and quadratic/higher-order programs (QP/HOP) can immediately use initial solution vectors obtained by domain knowledge or any precomputation.
For VI, it is necessary to know whether it is an upper or lower estimate to use the prior knowledge correctly.
The QP/HOP optimization process can start at the given initial vector.

SI can use the information of an initial estimate to infer a good initial strategy, as already suggested in \cite{KM17}. This reduces the number of iterations of the main loop and thus the runtime.
However, we have to ensure that the resulting strategy is proper. For example, we can check whether the target and sink states are reached almost surely from every state, and if not, we change the strategy to an attractor strategy where necessary, preferring those allowed actions that have a higher value.

We can also improve the MDP-solving for SI (Line \ref{line:si_mdp} of Algorithm \ref{alg:si}) by giving it the knowledge we currently have. 
We anyway save the estimate $\lb$ of the previous iteration and, since the strategies of Maximizer get monotonically better, $\lb$ certainly is a lower bound for the values in the MDP.

Even in the absence of domain knowledge or sophisticated precomputations, we can run unguaranteed VI first in order to get some estimates of the values. 
Then we can use those estimates for SI and QP/HOP.
Note that this is similar in spirit to the idea of optimistic value iteration~\cite{OVI}: utilize VI's ability to usually deliver tight lower bounds and then verify them.

\subsection{Topological Improvement}

For MDPs, topological improvements have been proposed for VI~\cite{TVI1} and for SI~\cite[Algorithm 3]{KM17}.
These utilize the fact that the underlying graph of the MDP can be decomposed into a directed acyclic graph of strongly connected components (SCC).
Intuitively speaking, there are parts of the graph that, after leaving them, can never be reached again. 
Their value depends solely on the parts of the state space that come after them.
So instead of computing the values on the whole game at once, one can iterate over the SCC in a backwards fashion, starting with the target and sink states and then propagating the information and solving the SCCs one by one according to their topological ordering.

This idea was extended to BVI for MDPs in \cite{ensure}. The proof generalizes to SGs, as the basic argument of the topological ordering of SCCs is independent from introducing a second player.
As the proof relies on the solutions for the later components being $\varepsilon$-precise, solving those components with the precise methods SI or QP is of course also possible.

\section{Experimental Results}

\textbf{Implementation:} We implemented all our algorithms as an extension of PRISM-games~3.0~\cite{PRISM-games3}. 
They are available via github \url{https://github.com/ga67vib/Algorithms-For-Stochastic-Games}.
For BVI, we reimplemented the algorithm as described in~\cite{KKKW18}; for SI and QP, this is the first implementation in PRISM-games.
To solve the quadratic program, we used Gurobi\footnote{\url{https://www.gurobi.com/}} or CPLEX\footnote{\url{https://www.ibm.com/analytics/cplex-optimizer}}.
For the higher-order programs, we constructed them with AMPL and solved them with MINOS\footnote{\url{https://ampl.com/products/solvers/solvers-we-sell/minos/}}.

\textbf{Setup:} All experiments were conducted on a Linux~Manjaro server with a 3.60~GHz Intel(R) Xeon(R) W-2123 CPU and 64~GB of RAM. We used a timeout of 15 minutes and set the java heap size for PRISM-games to 32~GB and the stack size to 16~GB\footnote{-javamaxmem 32g -javastack 16g}.
The precision for BVI was set to $10^{-6}$.
In theory, the other algorithms are precise; in practice, QP and higher-order programming can have numerical problems.
For a discussion about the practical precision of SI we refer to the end of Section \ref{sec:comparison} as well as \ifarxivelse{Appendix~\ref{app:practicalSI}}{\cite[Appendix C.3]{techreport}}.

\textbf{Case studies:} As case studies, we used those that are distributed with PRISM-games 3.0 as well as some handcrafted corner cases; see \ifarxivelse{Appendix \ref{app:models}}{\cite[Appendix C.1]{techreport}} for a detailed description.

We first analyze the impact of our optimizations by comparing different variants of the same algorithm.
The full tables that this analysis is based on are in \ifarxivelse{Appendix \ref{app:tables}}{\cite[Appendix C.2]{techreport}}, but general trends are also visible in Table \ref{tab:main}. 
Based on this, we select the best variants of each algorithm and compare between the algorithms.

\subsection{Value Iteration}
We could reproduce most of the findings of \cite{KKKW18}: the overhead of BVI compared to unguaranteed VI is usually negligible and not performing the expensive deflate operation in every step speeds up the computation.
Unguaranteed VI fails on three of the models.
In contrast to \cite{KKKW18}, we found no model where the simulation based asynchronous version BRTDP was significantly faster than BVI. In fact, BRTDP is only faster on a single model (cloud6, where BVI takes 2 seconds), but significantly slower on many others, often even failing to produce results in time.
Note that the implementation of BRTDP was in PRISM-games 2, and thus the disadvantage may also have technical reasons; improvements in the simulation engine or data structures might lead to speed-ups that make BRTDP competitive again.

The new topological variant of BVI (called TBVI) is usually in the same order of magnitude as the default approach.
The exception to this are the models AV15\_15 and especially MulMec, where TBVI is a lot slower.
A possible explanation is that TBVI solves every SCC of the model with a precision of $\varepsilon$. So when one SCC is solved and has a difference of almost exactly $\varepsilon$ between upper and lower bound, SCCs depending on it take a longer time to converge, as the information about their exits is suboptimal. This is particularly problematic when the model has a structure like MulMec, namely a chain of MECs, and hence a chain of SCCs.
This problem is not specific to topological VI for SGs, but can also occur for MDPs.

\subsection{Strategy Iteration}
Using BVI for the opponent's MDP and the warm start usually lead to small speed-ups. We did not consider using linear programming for the opponent's MDP, as it is not supported by PRISM.
The topological variant is significantly better in two\_inv and MulMec\_e3, but on the rest of the models performs very similar to the non-topological version. 
Combining topological SI and BVI for the opponent's MDP leads to the same problems with MulMec as when using topological BVI.
In contrast, topological SI with SI for the opponent's MDP works, because SI solves the SCCs precisely, allowing the SCCs depending on the previous solutions to converge as well.

\subsection{Quadratic Programming}
The original version of quadratic programming~\cite{DBLP:conf/dimacs/Condon90}, that requires to transform the SG into normal form is impractical, producing a result within 15 minutes for only 3 of 34 case studies, and even there taking a lot more time than the improved version.
After dropping the constraints $\noOneSucc$ and $\halfProbs$, it can correctly solve 14 of the case studies in time.
Both of these variants are prone to the numerical errors described in Section \ref{sec:QP_stopping} and produce incorrect results on some models.
The quadratic program obtained after dropping all but the $\twoSucc$ constraint is solved successfully by Gurobi in 19 instances; CPLEX on the other hand only solves 7 instances correctly, one time even reporting an incorrect result.
The warm start helps Gurobi on the model charlton1. Several other times, Gurobi discards the given initial suggestion and uses its own heuristics; thus, the warm start sometimes incurs a slight overhead.

Dropping all constraints, the higher-order program (HOP) with the solver Minos gets the correct result on 23 instances.
The HOP is typically faster than the QP, except on the models HW and AV.
The topological variant of both the quadratic programs as well as the higher-order program can lead to significant speed-ups, for example on charlton1, mdsm1, two\_inv or HW10\_10\_2.
The topological HOP is strictly better than all other algorithms in this subsection.

To estimate the impact of the EC solution method, we used several handcrafted or modified models:
A single large MEC (BigMec\_e2, with a MEC of size 201) cannot be solved with our approach, as there are too many choices; possibly, some heuristic could help identify reasonable strategies.
In contrast, small MECs do not affect runtime a lot.
The models cdmsn and dice50 prepended with a single three-state MEC are solved in the same time as the original models.
Even a chain of 1000 three-state MECs can be solved quickly (MulMec\_e3).

\subsection{Comparison}\label{sec:comparison}

\begin{table}[]
	\caption{Table for the verification times (in seconds) of several variations of the algorithms.
		An X in the table denotes that the computation did not finish within 15 minutes.
		A \colorbox{red!25}{red background colour} indicates that the returned result was wrong.		
		The four left-most columns give information about case study: its name, its size, the maximum/average number of actions and the number of relevant MECs.
		This table shows a selection of the most interesting case studies. They are roughly sorted by increasing size/difficulty, with scaled versions of the same model grouped together.
		The considered algorithms are the default and topological variant of BVI (with deflating only happening every 100 iterations); SI with BVI as MDP solver and topological SI with warm start and SI as MDP solver; and QP with Gurobi as solver and warm start, as well as the topological higher order program.
	}
	\label{tab:main}
\begin{tabular}{l rrr | rr rr rr}
Case Study  & States & Acts    & MECs & BVI$_{100}$     & TBVI$_{100}$    & SI$_\text{BV}$       & TSI$_\text{SI}^\text{W}$ & QP$_\text{G}^\text{W}$ & THOP         \\ \toprule

prison\_dil & 102    & 3/1.34  & 0    & \textless{}1 & \textless{}1 & \textless{}1 & \textless{}1                & 8                         & \textless{}1 \\
charlton1   & 502    & 3/1.56  & 0    & \textless{}1 & \textless{}1 & \textless{}1 & \textless{}1                & 144                       & \textless{}1 \\
cdmsn       & 1,240   & 2/1.66  & 0    & \textless{}1 & \textless{}1 & \textless{}1 & \textless{}1                & \textless{}1              & \textless{}1 \\
cdmsnMec    & 1,244   & 2/1.66  & 1    & \textless{}1 & \textless{}1 & \textless{}1 & \textless{}1                & \textless{}1              & \textless{}1 \\
cloud6      & 34,954  & 13/4.45 & 2176 & 2            & 3            & \textless{}1 & \textless{}1                & X                         & X            \\
mdsm1       & 62,245  & 2/1.34  & 0    & 5            & 3            & 5            & 3                           & X                         & 3            \\
dice50      & 96,295  & 2/1.48  & 0    & 6            & 6            & 6            & 6                           & X                         & 6            \\
dice50Mec   & 96,299  & 2/1.48  & 1    & 6            & 6            & 7            & 6                           & X                         & 6            \\
two\_inv    & 172,240 & 3/1.34  & 0    & 13           & 13           & 19           & 13                          & X                         & 20           \\
HW10\_10\_1 & 400,000 & 5/2.52  & 0    & 10           & 11           & 11           & 11                          & 98                        & 11           \\
HW10\_10\_2 & 400,000 & 5/2.52  & 0    & \textless{}1 & 1            & 2            & 2                           & 49                        & 1            \\
AV10\_10\_1 & 106,524 & 6/2.17  & 0    & \textless{}1 & \textless{}1 & \textless{}1 & \textless{}1                & 3                         & \textless{}1 \\
AV10\_10\_2 & 106,524 & 6/2.17  & 6    & 72           & 70           & 79           & 77                          & X                         & X            \\
AV10\_10\_3 & 106,524 & 6/2.17  & 1    & 45           & 55           & 50           & 60                          & X                         & X            \\
AV15\_15\_1 & 480,464 & 6/2.14  & 0    & 1            & 1            & 2            & 2                           & 11                        & 2            \\
AV15\_15\_2 & 480464 & 6/2.14  & 6    & X            & X            & 825          & X                           & X                         & X            \\
AV15\_15\_3 & 480,464 & 6/2.14  & 1    & X            & X            & 500          & X                           & X                         & X            \\
hm\_30      & 61     & 1/1.00  & 0    & X            & X            & X            & X                           & \textless{}1              & \colorbox{red!25}{\textless{}1} \\
MulMec\_e2  & 302    & 2/1.99  & 100  & 2            & X            & X            & \textless{}1                & \textless{}1              & \textless{}1 \\
MulMec\_e3  & 3,002   & 2/2.00  & 1000 & 151          & X            & X            & 3                           & 7                         & 4            \\
BigMec\_e2  & 203    & 2/1.99  & 1    & \textless{}1 & \textless{}1 & \textless{}1 & \textless{}1                & X                         & X            \\
BigMec\_e3  & 2,003   & 2/2.00  & 1    & 1            & 1            & 1            & 1                           & X                         & X            \\
BigMec\_e4  & 20,003  & 2/2.00  & 1    & 226          & 230          & X            & X                           & X                         & X           \\
\bottomrule
\end{tabular}
\end{table}

Comparing the algorithms, we see that BVI and SI perform very similar. BVI succeeds in the largest version of BigMec, but SI is the only one to solve the large and complicated models AV15\_15\_2/3, and TSI has the best runtime in MulMec\_e3.
TSI benefits from a model with small subcomponents that it can quickly solve (as in MulMec), while BVI is fast in subgraphs without probabilistic cycles (as the large chains in BigMec).
Depending on the model structure, both of these algorithms are a viable choice.

Topological higher-order programming (THOP), the improved version of QP, is comparable on many case studies, but still a lot worse on several others, e.g. cloud6 and BigMec. This volatility is even more pronounced for the QP.
The reason for this can be MECs which blow up the QP, but it can also happen in models with few or no MECs; in the latter case, we do not know which property of the model slows down the solving.
Note that QP and HOP are able to solve models with many small MECs (MulMec\_e3) quickly, while already one medium sized MEC (BigMec\_e2) makes it infeasible.

The model hm\_30 from \cite{BVI} deserves special attention: It was handcrafted as an adversarial example for VI.
Moreover, since the solvers for MDPs and MCs in PRISM use variants of VI, and since SI relies on those solvers, the PRISM implementation of SI also fails on the model\footnote{See \ifarxivelse{Appendix \ref{app:practicalSI}}{\cite[Appendix C.3]{techreport}} for more details on this.}.
QP shines on this model, being the only method to solve it.
However, for increasing parameter $N$, the probabilities in hm\_$N$ become so small that they are at the border of numerical stability.
If the state-chains in the model were prolonged by one more state (i.e. the parameter $N$ is set to 31), QP has numerical problems and reports an incorrect result.
Similarly, noting that THOP reports an incorrect result on hm\_30, one can experimentally find out that THOP succeeds only for $N\leq 25$.
So if the model exhibits very small probabilities, one has to consider using a solver capable of arbitrary-precision arithmetic.

\section{Conclusions} \label{sec:concl}

We have extended the three known classes of algorithm -- value iteration, strategy iteration and quadratic programming -- with several improvements and compared them both theoretically and practically. 

In summary, for all algorithms, the structure of the underlying graph is more important than its size; thus knowledge about the model is relevant both for estimating the expected time, as well as the preferred algorithm and combination of optimizations.
BVI and SI perform very similar on most models in our practical evaluation; each of them has some models where they are better.
Quadratic/higher order programming is volatile and typically slower than the other two; however, the used solver has a huge impact, as we already see when changing between CPLEX, Gurobi and Minos. 
Thus, advances in the area of optimization problems could make this solution method the most practical.

A direction for future work is to extend all algorithms to other objectives, e.g. total expected reward, mean payoff or parity.
Further, coming up with a polynomially sized convex QP would result in a polynomial-time algorithm, solving the long-standing open question.

\newpage
\bibliographystyle{eptcs}
\bibliography{ref}

\newpage
\ifarxivelse{
\appendix
\section{Transforming an Arbitrary Stochastic Game into Normal Form}\label{app:cnf} 

We describe the constructions of \cite{condonComplexity} to transform an arbitrary SG into one in normal form that is only polynomially larger. Normal form requires satisfying four conditions:
\begin{itemize}
	\item ($\twoSucc$) For all $\state \in \states : |\Av(\state)| \leq 2$.
	\item ($\noOneSucc$) If $|\Av(\state)| = 1$, then $\state$ is a target or a sink.
	\item ($\halfProbs$) For all $\state \in \states , \action \in \Av(\state), \state' \in \states: \trans(\state,\action,\state') \in \{0, 0.5, 1\}$.
	\item $\stopping$: There are no ECs in $\game$ (except for the sinks and targets).
\end{itemize}

\subsection{$\twoSucc$}
In normal form, every state $\state$ that is not an absorbing state must have at most two actions. To transform an arbitrary $\SG$ into one complying with $\twoSucc$, take every state $\state$ with $|\Av(\state) > 2|$ and construct a binary tree as illustrated in Figure \ref{ex:2Succ}. Two actions of $\state$ are taken and given to a new vertex $v'$. State $\state$ has then one action leading to the additional state $v'$ instead of its previous two actions. This can be done iteratively until there is a binary tree where $\state$ is the root and has only two actions. Note that every other inner node in this tree also has two actions, and the leaves are exactly $v_1, v_2, ..., v_k$. If a state $\state$ has $n \geq 2$ actions, then after the transformation there are $n-2$ additional states.
Every newly added state belongs to the same player as the original state $\state$.
\begin{figure}[htbp]

\centering
\begin{tikzpicture}[thick,scale=0.9, every node/.style={font=\large, transform shape}]
\drawdummy (init) at (2,1) {};
\drawbox (q) at (2,0) {$\mathsf{s_0}$};
\drawdummy (mid) at (2,-2) {};
\drawbox (11) at (4,-3) {$\mathsf{\target}$};
\drawbox (12) at (2,-3) {$\mathsf{s_1}$};
\drawcirc (0) at (0,-3)  {$\mathrm{\zeroSink}$};

\drawdummy (l) at (4.5, -1) {};
\drawdummy (r) at (5.5, -1) {};

\draw[->] (init) to (q);
\draw[-]  (q) to node [left ,midway] {$\mathsf{b}$}(mid);
\draw[->] (mid) to node [midway, above] {$\frac{1}{2}$} (0);
\draw[->] (mid) to node [midway, right] {$\frac{1}{2}$} (12);
\draw[->]  (q) to node [left ,midway] {$\mathsf{a}$}(0);
\draw[->]  (q) to node [above ,midway] {$\mathsf{c}$}(11);

\draw[->]  (0) to[loop below]  node [midway,below] {$\mathsf{a}$} (0);
\draw[->]  (11) to [loop below] node [midway,below] {$\mathsf{a}$} (11);
\draw[->]  (12) to [loop below] node [midway,below] {$\mathsf{a}$} (12);

\draw[->] (l) to (r);
\end{tikzpicture}
\begin{tikzpicture}[thick,scale=0.9, every node/.style={font=\large, transform shape}]
\drawdummy (init) at (2,1) {};
\drawbox (q) at (2,0) {$\mathsf{s_0}$};
\drawbox (p) at (1,-1.5){$\mathsf{s_1'}$}; 
\drawdummy (mid) at (1,-3) {};
\drawbox (11) at (4,-4) {$\target$};
\drawbox (12) at (2,-4) {$\mathsf{s_1}$};
\drawcirc (0) at (0,-4)  {$\mathrm{\zeroSink}$};

\draw[->] (init) to (q);
\draw[-]  (p) to node [right ,pos=0.3] {$\mathsf{b}$}(mid);
\draw[->] (mid) to node [midway, above] {$\frac{1}{2}$} (0);
\draw[->] (mid) to node [midway, above] {$\frac{1}{2}$} (12);
\draw[->]  (p) to node [left ,midway] {$\mathsf{a}$}(0);
\draw[->]  (q) to node [right ,midway] {$\mathsf{b}$}(11);
\draw[->]  (q) to node [left ,midway] {$\mathsf{a}$}(p);

\draw[->]  (0) to[loop below]  node [midway,below] {$\mathsf{a}$} (0);
\draw[->]  (11) to [loop below] node [midway,below] {$\mathsf{a}$} (11);
\draw[->]  (12) to [loop below] node [midway,below] {$\mathsf{a}$} (12);

\end{tikzpicture}
\caption[Transformation into $\twoSucc$]{An example of transforming an $\SG$ into one fulfilling the $\twoSucc$-constraint. $s_0$ has more than two actions, so a binary subtree is built up where each states except for the leafs $\target, s_1, \zeroSink$ has always exactly two actions.}
\label{ex:2Succ}
\end{figure}

\subsection{$\noOneSucc$}\label{app:no1act}
Every state that is not a target or sink must have more than one action. We can add a second action to every state that is missing one. This action leads to a target in the case of $\state \in \states<\circ>$ and otherwise to a sink. If there is no sink in $\states$, we can introduce an artificial sink. In any optimal strategy, neither player would choose the additional action, as the other option is at least as good as the additional one. Therefore, the additional actions do not influence the value of any state.

In Figure \ref{ex:exampleCondonBad}, $s_0$ has an additional action leading to $\zeroSink$. Every state $\state \in (\states<\circ> \setminus \{\zeroSink\})$ should have an action leading to $\target$, but we omit this to improve the readability of the figure.

\subsection{$\halfProbs$}
The normal form requires that transition probabilities have either are either 0, 0.5 or 1. This implies that every action has either one or two successors.
Let $\state \in \states$ be a state which has an action $\action$ that has an arbitrary amount of successors $v_1, v_2, ..., v_k$ with rational transitions probabilities $p_i:= \trans(\state, \action\ v_i) \in [0,1] \subset \mathbb{Q}$. 
Consider the greatest common divisor $q$ of all occurring transition probabilities of ($\state, \action$). Let $q'$ be the smallest power of 2 such that $q' \geq q$, i.e. $q' = 2^k, k \in \mathbb{N}: 2^{k-1} < q < 2^k = q'$. 

Create $\sfrac{1}{2} \cdot q'$ new vertices, each with one action and two transitions with probability 0.5. Out of the $q'$ many transitions, $p_i \cdot q$ lead to $v_i$. If a vertex has two transitions assigned that lead to the same state, the two transitions are unified to one with transition probability 1.  The $q' - q$ remaining transitions lead to $\state$. From the new vertices, we build up a binary tree with $\state$ as root such that $\state$ and every new state have one action each with two transitions that have only transition probabilities of 0, 0.5 or 1, and such that the probability of reaching $v_i$ from $\state$ is $p_i$.

Figure \ref{ex:exampleCondonBad} illustrates an example. $s_1$ has an action $a$ with transition probabilities $\sfrac{3}{9}, \sfrac{1}{9}, \sfrac{5}{9}$. The common divisor $q$ of all these fractions is 9. The next power of 2 is 4, so we have $q = 9 \leq 16 = 2^4 = q'$. We need 8 states $h_1'', h_2'', ..., h_8''$ that store the 16 transitions. Every transition has a probability of $\sfrac{1}{2}$.
\begin{itemize}
	\item The probability to reach $\zeroSink$ from $\state$ is $\sfrac{3}{9}$, therefore 3 transitions have to lead there. We let $h_1''$ lead to $\zeroSink$ with probability 1 and $h_2''$ with one of the two transitions.
	\item The probability to reach $s_2$ from $\state$ is $\sfrac{1}{9}$, therefore the second transition of $h_2''$ leads in $s_2$.
	\item The probability to reach $\target$ from $\state$ is $\sfrac{5}{9}$, therefore $h_3''$, $h_4''$ and one transition of $h_5''$ lead to $\target$.
	\item The remaining transitions lead back to $\state$.
\end{itemize}
To connect $h_1'', h_2'', ..., h_8''$ to $\state$ a binary tree is constructed with $h_1', h_2', h_3', h_4'$ and $h_1, h_2$.

\begin{figure}[htbp]

\centering
\begin{tabular}{@{}c@{}}
\begin{tikzpicture}[thick,scale=0.9, every node/.style={font=\large, transform shape}]
\drawdummy (init) at (0,1) {};
\drawbox (q) at (0,0) {$\mathsf{s_0}$};
\drawcirc (p) at (2,0) {$\mathsf{s_1}$};
\drawdummy (mid) at (2,-1) {};
\drawbox (11) at (4,-3) {$\mathsf{\target}$};
\drawbox (12) at (2,-3) {$\mathsf{s_2}$};
\drawcirc (0) at (0,-3)  {$\mathrm{\zeroSink}$};

\drawdummy (l) at (2, -4.5) {};
\drawdummy (r) at (2, -5.5) {};

\draw[->] (init) to (q);
\draw[->]  (q) to node [above ,midway] {$\mathsf{a}$}(p);

\draw[-] (p) to node [left, midway] {$\mathsf{a}$}(mid);

\draw[->] (mid) to node [midway, above, pos=0.7] {$\frac{3}{9}$} (0);
\draw[->] (mid) to node [midway, above, pos=0.7] {$\frac{5}{9}$} (11);
\draw[->] (mid) to node [midway, left] {$\frac{1}{9}$} (12);

\draw[->]  (0) to[loop below]  node [midway,below] {$\mathsf{a}$} (0);
\draw[->]  (11) to [loop below] node [midway,below] {$\mathsf{a}$} (11);
\draw[->] (12) to [bend right] node [midway,below] {$\mathsf{a}$} (11);

\draw[->] (l) to (r);
\end{tikzpicture}
\end{tabular}
\vspace{\floatsep}

\centering
\begin{tabular}{@{}c@{}}
\begin{tikzpicture}[thick,scale=0.6, every node/.style={font=\huge, transform shape}]
\drawdummy (init) at (-8,2) {};
\drawbox (q) at (-8,0) {$\mathsf{s_0}$};
\drawcirc (p) at (4,0) {$\mathsf{s_1}$};
\drawdummy (mid) at (4,-2) {};
\drawbox (11) at (2,-18) {$\mathsf{\target}$};
\drawbox (12) at (-1,-18) {$\mathsf{s_2}$};
\drawcirc (0) at (-4,-18)  {$\mathrm{\zeroSink}$};

\draw[->] (init) to (q);
\draw[->] (q) to node [above, midway] {$\mathsf{a}$}(p);
\draw[->] (q) to [bend right] node [left, midway] {$\mathsf{b}$}(0);

\draw[->] (0) to[loop below]  node [midway,below] {$\mathsf{a}$} (0);
\draw[->] (11) to [loop below] node [midway,below] {$\mathsf{a}$} (11);
\draw[->] (12) to [bend left] node [midway,below] {$\mathsf{b}$} (0);
\draw[->] (12) to [bend right] node [midway,below] {$\mathsf{a}$} (11);

\drawcirc (sHalf1) at (-4, -12) {$\mathsf{h_1''}$};
\drawcirc (sHalf2) at (-2, -12) {$\mathsf{h_2''}$};
\drawdummy (sHalf2Mid) at (-2, -15) {};
\drawcirc (sHalf3) at (0, -12) {$\mathsf{h_3''}$};
\drawcirc (sHalf4) at (2, -12) {$\mathsf{h_4''}$};
\drawcirc (sHalf5) at (4, -12) {$\mathsf{h_5''}$};
\drawdummy (sHalf5Mid) at (4, -15) {};
\drawdummy (sHalf5Help) at (14,0) {};
\drawcirc (sHalf6) at (6, -12) {$\mathsf{h_6''}$};
\drawcirc (sHalf7) at (8, -12) {$\mathsf{h_7''}$};
\drawcirc (sHalf8) at (10, -12) {$\mathsf{h_8''}$};

\draw[->] (sHalf1) to node [left, midway] {$\mathsf{a}$}(0);
\draw[-] (sHalf2) to node [right, midway] {$\mathsf{a}$}(sHalf2Mid);
\draw[->] (sHalf2Mid) to node [text width =0.5cm, midway, left] {$\frac{1}{2}$} (0);
\draw[->] (sHalf2Mid) to node [text width =-0.5cm, midway, right] {$\frac{1}{2}$} (12);

\draw[->] (sHalf3) to node [text width =1.0cm, above, midway] {$\mathsf{a}$}(11);
\draw[->] (sHalf4) to node [text width =1.0cm, above, midway] {$\mathsf{a}$}(11);

\draw[-] (sHalf5) to node [text width =1.0cm, right, midway] {$\mathsf{a}$}(sHalf5Mid);
\draw[->] (sHalf5Mid) to node [text width =0.5cm, midway, left] {$\frac{1}{2}$} (11);
\draw[-] (sHalf5Mid) to (14, -15)  to node [text width =0.5cm, midway, right] {$\frac{1}{2}$} (sHalf5Help);
\draw[->] (sHalf5Help) to (p);

\draw[->] (sHalf6) to (8, -14) to (13, -14) to (13, -1) to node [text width =1.0cm, above, near start] {$\mathsf{a}$}(p);
\draw[->] (sHalf7) to (10, -13) to (12, -13) to (12, -2) to node [text width = -0.1cm, above, near start] {$\mathsf{a}$}(p);
\draw[->] (sHalf8) to (11, -12) to (11, -3) to node [text width =-0.1cm, above, near start] {$\mathsf{a}$}(p);

\drawcirc (sLevelTwo1) at (-3, -8) {$\mathsf{h_1'}$};
\drawdummy (sLevelTwo1Mid) at (-3, -10) {};
\drawcirc (sLevelTwo2) at (1, -8) {$\mathsf{h_2'}$};
\drawdummy (sLevelTwo2Mid) at (1, -10) {};
\drawcirc (sLevelTwo3) at (5, -8) {$\mathsf{h_3'}$};
\drawdummy (sLevelTwo3Mid) at (5, -10) {};
\drawcirc (sLevelTwo4) at (9, -8) {$\mathsf{h_4'}$};
\drawdummy (sLevelTwo4Mid) at (9, -10) {};

\draw[-] (sLevelTwo1) to node [text width =1.0cm, right, midway] {$\mathsf{a}$}(sLevelTwo1Mid);
\draw[->] (sLevelTwo1Mid) to node [text width =0.5cm, near start, left] {$\frac{1}{2}$} (sHalf1);
\draw[->] (sLevelTwo1Mid) to node [text width =-0.5cm, near start, right] {$\frac{1}{2}$} (sHalf2);

\draw[-] (sLevelTwo2) to node [text width =1.0cm, right, midway] {$\mathsf{a}$}(sLevelTwo2Mid);
\draw[->] (sLevelTwo2Mid) to node [text width =0.5cm, near start, left] {$\frac{1}{2}$} (sHalf3);
\draw[->] (sLevelTwo2Mid) to node [text width =-0.5cm, near start, right] {$\frac{1}{2}$} (sHalf4);

\draw[-] (sLevelTwo3) to node [text width =1.0cm, right, midway] {$\mathsf{a}$}(sLevelTwo3Mid);
\draw[->] (sLevelTwo3Mid) to node [text width =0.5cm, near start, left] {$\frac{1}{2}$} (sHalf5);
\draw[->] (sLevelTwo3Mid) to node [text width =-0.5cm, near start, right] {$\frac{1}{2}$} (sHalf6);

\draw[-] (sLevelTwo4) to node [text width =1.0cm, right, midway] {$\mathsf{a}$}(sLevelTwo4Mid);
\draw[->] (sLevelTwo4Mid) to node [text width =0.5cm, near start, left] {$\frac{1}{2}$} (sHalf7);
\draw[->] (sLevelTwo4Mid) to node [text width =-0.5cm, near start, right] {$\frac{1}{2}$} (sHalf8);

\drawcirc (sLevelOne1) at (-1, -4) {$\mathsf{h_1}$};
\drawdummy (sLevelOne1Mid) at (-1, -6) {};
\drawcirc (sLevelOne2) at (7, -4) {$\mathsf{h_2}$};
\drawdummy (sLevelOne2Mid) at (7, -6) {};

\draw[-] (sLevelOne1) to node [text width =1.0cm, right, midway] {$\mathsf{a}$}(sLevelOne1Mid);
\draw[->] (sLevelOne1Mid) to node [text width =1cm, near start, left] {$\frac{1}{2}$} (sLevelTwo1);
\draw[->] (sLevelOne1Mid) to node [text width =-1cm, near start, right] {$\frac{1}{2}$} (sLevelTwo2);

\draw[-] (sLevelOne2) to node [text width =1.0cm, right, midway] {$\mathsf{a}$}(sLevelOne2Mid);
\draw[->] (sLevelOne2Mid) to node [text width =1cm, near start, left] {$\frac{1}{2}$} (sLevelTwo3);
\draw[->] (sLevelOne2Mid) to node [text width =-1cm, near start, right] {$\frac{1}{2}$} (sLevelTwo4);

\draw[-] (p) to node [text width =1.0cm, above, midway] {$\mathsf{a}$}(mid);
\draw[->] (mid) to node [midway, above] {$\frac{1}{2}$} (sLevelOne1);
\draw[->] (mid) to node [midway, above] {$\frac{1}{2}$} (sLevelOne2);

\end{tikzpicture}
\end{tabular}
\caption[Transformation into \cnf ]{An example of an arbitrary $\SG$ that gets transformed into \cnf{}. States $h_1'', h_2'', ..., h_8''$, $h_1', h_2', h_3', h_4'$ and $h_1, h_2$ are introduced to fulfill the $\halfProbs$-constraint. State $\mathsf{s_2}$ has now a second action leading to sink $\zeroSink$, as otherwise it would not compy with $\noOneSucc$. All the $\circ$-states with only one action except the sink $\zeroSink$ must have an action leading to $\target$ but we omit these for the sake of readability.}
\label{ex:exampleCondonBad}
\end{figure}

\subsection{Stopping game}\label{app:stopping}
To avoid the possibility of never reaching any absorbing state, we add a transition with a small probability $\varepsilon$ leading to a sink-state $\zeroSink$ to every action. If the players pick strategies that trap the play in a MEC in the original $\SG$, the play would almost surely reach $\zeroSink$ in the modified $\SG$. 
If the $\varepsilon$ is chosen sufficiently small, one can infer the value in the original SG from the modified SG~\cite{condonComplexity}.
This is due to the fact that the value of an SG must be rational, where the denominator is at most $4^{\abs{\states}}$, and thus we can round the value in the modified SG to the nearest fraction with denominator $4^{\abs{\states}}$.

In Figure \ref{ex:exampleToStopping}, we illustrate the transformation of a $\SG$. Although the initial $\SG$ is already stopping in this example, we chose this simple game for explanatory purposes.
For the $\varepsilon$-transitions that have to be introduced to comply with the $\stopping$ constraint, the paper~\cite{DBLP:conf/dimacs/Condon90} suggest something smarter instead of constructing a binary tree:
for every state-action pair ($\state, \action$) add a new state $\state'$ which has only one action that has the same successors and transition probabilities as ($\state, \action$) in the initial game. The state $\state$ instead has a new action $\action'$ with $\sfrac{1}{2}$-probabilities of leading either to $\state'$ or to the state $h_{1}^{\state, \action}$ of a chain of $m \in \mathbb{N}$ (sufficiently large) new states $h_{1}^{\state, \action}, h_{2}^{\state, \action}, ... h_{m}^{\state, \action}$. Each state $h_{i}^{\state, \action}$ with $i \in [m-1]$ has only one action with two $\sfrac{1}{2}$-probabilities leading to either $h_{i+1}^{\state, \action}$ or $\state'$ (see Figure \ref{ex:exampleToStopping}). The action of $h_{m}^{\state, \action}$ has also probabilities of $\sfrac{1}{2}$ and leads either to a sink $\zeroSink$ or to $\state'$. 
\begin{figure}[htbp]

\centering
\begin{tikzpicture}[thick,scale=0.9, every node/.style={font=\large, transform shape}]
\drawdummy (init) at (-1,0) {};
\drawbox (s0) at (0,0) {$\mathsf{s_0}$};
\drawbox (s0a) at (4,0) {$\mathsf{s_{0,a}'}$};
\drawbox (s1) at (16,0) {$\target$};
\drawdummy (s0Mid) at (2,2) {};

\drawbox (m1) at (4,4) {$\mathsf{h_1}$};
\drawdummy (m1Mid) at (6,4) {};
\drawbox (m2) at (8,4) {$\mathsf{h_2}$};
\drawdummy (m2Mid) at (10,4) {};
\drawdummy (helper1) at (11,4) {};
\drawdummy (helper2) at (13,4) {};
\drawbox (m3) at (15,4) {$\mathsf{h_m}$};
\drawdummy (m3Mid) at (14,2.5) {};
\drawcirc (0) at (16,2)  {$\mathrm{\zeroSink}$};

\filldraw 
(11.5,4) circle (2pt)
(12,4) circle (2pt) 
(12.5,4) circle (2pt);


\draw[->] (init) to (s0);
\draw[-] (s0) to node [left ,midway] {$\mathsf{a}$}(s0Mid);
\draw[->] (s0Mid) to node [text width =1.0cm, align=left, midway, below] {$\frac{1}{2}$} (s0a);
\draw[->] (s0Mid) to node [text width =1.0cm, near end, above] {$\frac{1}{2}$} (m1);
\draw[->] (s0a) to node [above, midway] {$\mathsf{a}$}(s1);

\draw[-] (m1) to node [above,midway] {$\mathsf{a}$}(m1Mid);
\draw[->] (m1Mid) to node [text width =1.0cm, midway, below] {$\frac{1}{2}$} (s0a);
\draw[->] (m1Mid) to node [text width =1.0cm, midway, above] {$\frac{1}{2}$} (m2);

\draw[-] (m2) to node [above,midway] {$\mathsf{a}$}(m2Mid);
\draw[->] (m2Mid) to node [text width =-1.0cm, midway, right] {$\frac{1}{2}$} (s0a);
\draw[-] (m2Mid) to node [text width =1.0cm, midway, above] {$\frac{1}{2}$} (helper1);
\draw[->] (helper2) to node [text width =1.0cm, midway, below] {$\frac{1}{2}$} (m3);

\draw[-] (m3) to node [left,midway] {$\mathsf{a}$}(m3Mid);
\draw[->] (m3Mid) to node [text width =1.0cm, midway, above] {$\frac{1}{2}$} (s0a);
\draw[->] (m3Mid) to node [midway, above] {$\frac{1}{2}$} (0);

\draw[->]  (s1) to [loop right] node [midway,right] {$\mathsf{a}$} (s1);
\draw[->]  (0) to [loop right] node [midway,right] {$\mathsf{a}$} (0);

\end{tikzpicture}
\caption[Example of how to transform into stopping games]{An example of adding the $\epsilon$-transitions to a simple $SG$ where initially there were only $s_0$, $\target$ and the action $\action$ with $\post(s_0, \action) = \post(\target, \action) = \target$. The action of $s_{0,a}$ is simulating the outcome of taking action $\action$ in $s_0$ in the initial $SG$ while $h_1, h_2, ... h_m$ adds the $\epsilon$ needed if the initial $SG$ would be non-stopping}
\label{ex:exampleToStopping}
\end{figure}

\newpage
\section{Proofs for Section \ref{sec:generalQP}} \label{qpProofs}

\subsection{Drop $\noOneSucc$}\label{app:proofNo1Act}
\begin{lemma}[Drop $\noOneSucc$]\label{lem:no1Act}
Let $\state \in \states$ be a state in a $\SG$ with two actions $\action$ and $\actionb$: $\action$ is the action $\state$ originally had, and $\actionb$ is the additional action inserted to comply with $\noOneSucc$.

Then it holds that $\val(\state) = \val(\state,\action)$.
\end{lemma}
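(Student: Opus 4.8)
The plan is to read off the result directly from the Bellman characterization of the value in the transformed game (Equation~\ref{eq:Vs}), distinguishing the two cases according to the owner of $\state$. Recall from the construction that the extra action $\actionb$ is deterministic: it leads to a sink when $\state \in \states<\Box>$ and to a target when $\state \in \states<\circ>$. A preliminary observation I would record is that adding $\actionb$ creates only one new outgoing edge at $\state$ and does not alter the transition function at any other state, so the quantity $\val(\state,\action) = \sum_{\state' \in \states} \trans(\state,\action,\state') \cdot \val(\state')$ is still the expected value over the (unchanged) successors of $\action$. The whole argument then hinges on the elementary bound $0 \leq \val(\state') \leq 1$ for every state $\state'$, which holds because values are reachability probabilities.

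First I would treat $\state \in \states<\Box>$. Here the value is governed by the $\max$-rule, so $\val(\state) = \max\{\val(\state,\action),\, \val(\state,\actionb)\}$. Since $\actionb$ moves deterministically into a state of $\sink$, we have $\val(\state,\actionb) = 0$; and since $\val(\state,\action)$ is a convex combination of values lying in $[0,1]$, it is nonnegative. Hence $\max\{\val(\state,\action),\,0\} = \val(\state,\action)$, which gives $\val(\state) = \val(\state,\action)$.

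The case $\state \in \states<\circ>$ is the mirror image. Now the $\min$-rule applies, so $\val(\state) = \min\{\val(\state,\action),\, \val(\state,\actionb)\}$, and because $\actionb$ leads deterministically into $\fstates$ we have $\val(\state,\actionb) = 1$. As $\val(\state,\action) \leq 1$, we conclude $\min\{\val(\state,\action),\,1\} = \val(\state,\action)$, and again $\val(\state) = \val(\state,\action)$.

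I do not expect any serious obstacle: the statement is essentially a one-line consequence of the $\max$/$\min$ Bellman rule once the trivial values $\val(\state,\actionb) \in \{0,1\}$ are substituted. The only point deserving a careful sentence is the well-definedness remark from the first paragraph---that inserting $\actionb$ does not disturb the successor values feeding into $\val(\state,\action)$---so that the claimed equality is genuinely an identity within the transformed game rather than an artefact of comparing two different games.
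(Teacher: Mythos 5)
Your proof is correct, and it rests on the same key observation as the paper's---the inserted action $\actionb$ is weakly dominated, being worth exactly $0$ at a Maximizer state and exactly $1$ at a Minimizer state---but the formal route differs. The paper argues at the level of strategies: it fixes an arbitrary opponent strategy, compares the strategy of $\state$'s owner that plays $\actionb$ (achieving value $0$, resp.\ $1$) with the one that plays $\action$, and concludes that some optimal strategy plays $\action$ in $\state$, hence $\val(\state)=\val(\state,\action)$. You instead read the identity directly off the Bellman equations: since $\val$ is a fixpoint of Equation~\ref{eq:Vs}, at $\state$ it equals $\max\{\val(\state,\action),\val(\state,\actionb)\}$ (resp.\ $\min$), and substituting $\val(\state,\actionb)=0$ (resp.\ $1$) together with $0\le\val(\state,\action)\le 1$ finishes the job. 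Your version is shorter because it leverages the fixpoint characterization the paper has already established (Equations~\ref{eq:Vs} and~\ref{eq:Vsa}); the paper's version argues at the semantic level and does not invoke that characterization. Two points you handle well and should keep: the preliminary remark that adding $\actionb$ does not perturb the successor values entering $\val(\state,\action)$, and---worth making fully explicit---the fact that the transformation only adds $\actionb$ to states that are neither targets nor sinks, so the $\max$/$\min$ clause of Equation~\ref{eq:Vs}, rather than the clause fixing targets to $1$, is indeed the one that applies to $\state$.
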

\begin{proof}
To prove the lemma, we will make a case distinction based on whether $\state \in \states<\Box>$ or $\state \in \states<\circ>$.

\begin{enumerate}
	\item[case $\state \in \states<\Box>$:] Let the strategy $\strab$ of the Minimizer be arbitrary.
	Let $\straa_b$ be a strategy for the Maximizer in which $\state$ takes action $\actionb$. Per construction, $\actionb$ leads to a sink, and therefore it holds that $\mathsf{\val}_{\straa _b,\strab} (\state) = 0$. Let $\straa_a$ be a strategy for the Maximizer in which $\state$ takes action $\action$. For every state $\state' \in \states$ it holds that $\val(\state') \geq 0$. Thus, $\mathsf{\val}_{\straa _a,\strab} (\state) \geq \mathsf{\val}_{\straa _b,\strab} = 0$.
	It follows that every optimal Maximizer strategy $\straa$ may take action $\action$ in $\state$.
	
	The proof for $\state \in \states<\circ>$ is analogous
	\item[case $\state \in \states<\circ>$:] Let the strategy $\straa$ of the Maximizer be arbitrary.
	Let $\strab_b$ be a strategy for the Minimizer in which $\state$ takes action $\actionb$. Per construction, $\actionb$ leads to a target, and therefore it holds that $\mathsf{\val}_{\straa,\strab_b} (\state) = 1$. Let $\straa_a$ be a strategy for the Minimizer in which $\state$ takes action $\action$. For every state $\state' \in \states$ it holds that $\val(\state') \leq 1$. Thus, $\mathsf{\val}_{\straa,\strab_a} (\state) \leq \mathsf{\val}_{\straa,\strab_b} = 1$.
	It follows that every optimal Minimizer strategy $\strab$ may take action $\action$ in $\state$.
\end{enumerate}
Since $\state$ picks $\action$ in every optimal strategy it holds that $\val(\state) = \val(\state,\action)$.
\end{proof}

\subsection{Drop $\halfProbs$}\label{app:halfProbs}

For the transformation of a $\SG$ into one that complies with $\halfProbs$, we pick state-action pairs ($\state, \action$) with $\state \in \states , \action \in \Av(\state)$ that do not have fitting transition probabilities and construct a corresponding binary tree. Every inner node of the binary tree has only one action with the required transition probabilities. Since we have seen in Lemma \ref{lem:no1Act} that it is not necessary to provide a second action for the inner nodes, we assume they have just one.

We now explain why we do not require the $\SG$ to comply with $\halfProbs$ and that we can use arbitrary rational transition probabilities instead.

Let $\game$ be an SG complying with $\halfProbs$.
Let $v \in \states$ be a state in a binary tree that was constructed to achieve this.
$v$ has only one action $\action$. The value of $v$ is $\val(v) = \sum\limits_{\state' \in \post(v,\action)} \trans(v, \action, \state') \cdot \val(\state')$. 
We construct a modified SG $\game'$ by redirecting all transitions from other states $\state \neq v$ leading to $v$ directly to the successors of $v$ (note that $v$ cannot have a self loop by construction). 
Formally, if $v \in \post(\state,\action)$, then the modified transition function $\trans'$ is defined as follows:
$\trans'(\state,\action,v) = 0$ and for all $\state' \in \post(\state,\action) \setminus \{v\}$ we have
$\trans'(\state,\action,\state ') = \trans(\state,\action ,\state' ) + \trans(\state,\action ,v) \cdot \trans(v,\action ,\state' )$.
Then we get the following chain of equations:
\begin{equation*}
\begin{aligned}
\val<\game>(\state, \action) & = \sum\limits_{\state ' \in \post(\state, \action)} 
\trans(\state ,\action ,\state ') \cdot \val(\state ')\\ 
& = \sum\limits_{\state ' \in (\post(\state, \action) \setminus \{v\})} 
\trans(\state ,\action ,\state ') \cdot \val(\state ') + \delta(\state,\action,v) \cdot \val(v)\\ 
& =
\sum\limits_{\state ' \in (\post(\state, \action) \setminus \{v\})} 
\trans(\state ,\action ,\state ') \cdot \val(\state ') + 
\delta(\state,\action,v) \cdot \sum\limits_{\state ' \in \post(v, \action)} 
\trans(v ,\action ,\state ') \cdot \val(\state ') \\
& = \sum\limits_{\state ' \in \post '(\state, \action)} 
\delta '(\state ,\action ,\state ') \cdot \val(\state ') = \val<\SGinstance '>(\state, \action)
\end{aligned}
\end{equation*}
In the last step, $\post '$ is $\post$ according to $\trans'$.

The value of state $\state$ is no longer dependant on $v$ but only on its successors.
With this construction, we can reverse the $\halfProbs$-transformation into the original $\SG$ by iteratively removing the inner nodes of the trees, as they only have one action with no self looping probability. We preserve the value of every state. \qed

\subsection{Drop Stopping Game}\label{app:proofMEC}

We prove that adding the constraint $\val(\state) = \max\limits_{\straa}\min\limits_{\strab}\valstra(\state)$ for all strategies $(\straa,\strab)$ in the MEC $T$ and all states $\state \in T$ ensures that the resulting QP has a unique solution, namely the correct vector of values.

For this, we proceed in the following steps:
\begin{itemize}
	\item Recall relevant definitions.
	\item Prove that $\valstra(t) = \pr_{t}^{\straa,\strab}(\Diamond \fstates)$ for all $t \in T$.
	\item Follow that $\val(\state) = \max\limits_{\straa}\min\limits_{\strab}\valstra(\state)$ for all strategies $(\straa,\strab)$.
	\item Argue that adding this constraint ensures the convergence of the QP to a unique correct solution.
	\item Show why we can restrict to consider only strategies in the MEC $T$.
\end{itemize}

\begin{enumerate}
	\item Recall the following definitions:
	\begin{itemize}
		\item $\exits=\{(\state,\action) \mid \state \in T \wedge \post(\state,\action) \nsubseteq T \}$ is the set of exiting state-action pairs.
		\item $p^{\straa,\strab}_{e_{(\state,\action)}}(t)$ is the probability to reach the sink state $e_{(\state,\action)}$ in the induced Markov chain $\game[\straa,\strab]$ modified so that the exiting actions lead to sinks.
		\item $\displaystyle \valstra(t) = \sum_{(\state,\action) \in \exits} p^{\straa,\strab}_{e_{(\state,\action)}}(t) \cdot \val'(\state,\action)$ is the value of a state $t \in T$, assuming strategies $\sigma$ and $\tau$ are played.
		
		Note the following important difference between the constraint added to the QP and this proof: The proof uses $\val'(\state,\action)$ instead of $\val(\state,\action)$. 
		The problem is that the symbol $\val$ is overloaded in the context of the QP: it can refer both to the value of the SG as well as to the variable of the QP that eventually converges to the value, but that can have other valuations during the computation.
		
		In the proof, it is important to distinguish the actual value of the SG -- $\val$ -- and the value of the SG assuming we play $(\straa,\strab)$ in the MEC -- $\val'$ -- which is what the variables of the QP are set to.
		However, when adding the constraint to the optimization problem, the distinction between $\val$ and $\val'$ is not necessary, since the valuation of the variable $\val(\state,\action)$ of the QP always depends on the current strategies.
	\end{itemize}

	\item We now prove that $\valstra(t) = \pr_{t}^{\straa,\strab}(\Diamond \fstates)$ for all $t \in T$. In other words, we prove that the computation we use for $\valstra$ actually captures the concept of probability to reach the target under the strategies~$(\straa,\strab)$.
	For this, we use the following chain of equations: 
	\begin{align*}
	\valstra(t) &\eqdef \sum_{(\state,\action) \in \exits} p^{\straa,\strab}_{e_{(\state,\action)}}(t) \cdot \val'(\state,\action)\\
	&= \sum_{(\state,\action) \in \exits} p^{\straa,\strab}_{e_{(\state,\action)}}(t) \cdot \left(\sum_{\state' \in \post(\state,\action)} \trans(\state,\action,\state') \cdot \val'(\state')\right)\\
	&=\sum_{(\state,\action) \in \exits} p^{\straa,\strab}_{e_{(\state,\action)}}(t) \cdot \left(\sum_{\state' \in \post(\state,\action)} \trans(\state,\action,\state') \cdot \pr_{\state'}^{\straa,\strab}(\Diamond \fstates)\right)\\
	&=\pr_{t}^{\straa,\strab}(\Diamond \fstates)
	\end{align*}
	In the last step, we pull together the unfolded probability of the path (going to some exit, taking an exiting action and then continuing from the successor); and we use the fact that all paths reaching the target have to pass through some exiting state-action pair, as otherwise they are stuck in the EC forever.
	
	Note that this relies on the assumption that there is no target state in the MEC; this assumption is justified, since we argued in the preliminaries that every target state has only one action which is a self loop, and we exclude these trivial MECs from consideration, because their value is immediately set correctly to~1.
		
	\item It follows from the previous step that for all $t \in T$ we have 
	\[ \val(t) \eqdef \sup_{\straa} \inf_{\strab} \pr_{t}^{\straa,\strab}(\Diamond \fstates)= \sup_{\straa} \inf_{\strab} \valstra(t) \]
	We overload the symbol $\valstra$ to also denote the probability for states outside the MEC $T$ to reach the targets under strategies $\straa$ and $\strab$, so formally:
	$\valstra(t) \eqdef 
	\begin{cases}
		\sum_{(\state,\action) \in \exits} p^{\straa,\strab}_{e_{(\state,\action)}}(t) \cdot \val'(\state,\action) &\mbox{ if } t \in T\\
		\pr_{t}^{\straa,\strab}(\Diamond \fstates) &\mbox{ otherwise }
	\end{cases}$.
	
	Then, trivially, we also have that for all state $\state \in \states$
		\[ \val(\state) = \sup_{\straa} \inf_{\strab} \valstra(\state) \]
		
	As there are only finitely many memoryless deterministic strategies, and those strategies suffice to attain the optimal value in simple stochastic games, we also have for all state $\state \in \states$
	\[ \val(\state) = \max\limits_{\straa}\min\limits_{\strab}\valstra(\state) \]

	\item 
	The problem of MECs is that in these state sets there are multiple solutions to the Bellman equations (cf. Example \ref{ex:ubNoConvergeVI}), and thus multiple solutions to the quadratic program. By constraining all states in the MECs to $\max\limits_{\straa}\min\limits_{\strab}\valstra(\state)$, we constrain them to exactly their value (by the previous step). 
	Thus, we solve the problem, as now the additional solutions are excluded.
	
	\item 
	Note that so far, this proof considered strategies on the whole state space. 
	However, our algorithm only iterates over all strategies \emph{in the MEC}. 
	
	This is sufficient, because the states outside the MEC are solved by the rest of the QP, and the newly added constraints depend on those solutions, as they depend on the valuation of $\val(\state,\action)$ of exiting state-action pairs.
\end{enumerate}

\qed

\section{Details on the Experiments}
\subsection{Case Studies}\label{app:models}

The case studies coins, prison\_dil, adt, charlton, cdmsn, cloud, mdsm, dice and two\_investors are distributed with PRISM-games 3.0 or available on their case-study-website\\
 \url{http://www.prismmodelchecker.org/games/casestudies.php}.

To judge the impact of a single small MEC, we prepended dice and cdmsn with a single MEC as in Figure \ref{fig:ManyMec}. The exits lead to the initial state of the original model with some probability, and the remaining probability leads to a sink. 

HW and AV are the models used in~\cite{CavLex}; the first two indices show the size of the grid, the last index denotes the single property used.

As interesting handcrafted examples, we used the adversarial model for value iteration from \cite{BVI} (called hm) as well as two newly handcrafted models with either one large MEC (BigMec) or many 3 state MECs (MulMECs).
In BigMec, there is a MEC with two chains of $N$ Maximizer states, see Figure \ref{fig:BigMec}.
In MulMec, a single MEC is repeated $N$ times. For the first $N-1$ repetitions, both exits lead to $\state<0>$ of the next MEC with some probability and to $\state<0>$ of the current MEC with the rest. For the last repetition, both exits lead to some probabilistic combination of target and sink.

\begin{figure}[htbp]

\centering
\begin{tikzpicture}[scale=1, every node/.style={transform shape}]
\drawdummy (init) at (-1,0) {};
\drawcirc (s0) at (0,0) {$\mathsf{s_0}$};
\drawbox (s1) at (1,1) {$\mathsf{s_1}$};
\drawbox (s2) at (1,-1) {$\mathsf{s_2}$};
\drawdummy (mid) at (0.6,-0.2) {};
\drawdummy (upExit) at (2,1) {};
\drawdummy (downExit) at (2,-1) {};

\draw[->] (init) to (s0);
\draw[->]  (s0) to [bend left] (s1);
\draw[-]  (s0) to (mid);
\draw[->]  (mid) to [bend right] (s1);
\draw[->]  (mid) to [bend left] (s2);
\draw[->]  (s1) to [bend left] (s0);
\draw[->]  (s2) to [bend left] (s0);

\draw[->]  (s1) to (upExit);
\draw[->]  (s2) to (downExit);

\end{tikzpicture}
\caption{This is the MEC that is used in the handcrafted MulMec model, as well as in cdmsnMEC and dice50MEC.}
\label{fig:ManyMec}
\end{figure}

\begin{figure}[htbp]

\centering
\begin{tikzpicture}[scale=1, every node/.style={transform shape}]
\drawdummy (init) at (-1,0) {};
\drawcirc (p) at (0,0) {$\mathsf{s_0}$};
\drawbox (s1) at (1,1) {$\mathsf{u_1}$};
\drawbox (s3) at (1,-1) {$\mathsf{l_1}$};
\drawbox (s2) at (3,1) {$\mathsf{u_2}$};
\drawbox (s4) at (3,-1) {$\mathsf{l_2}$};
\drawbox (s5) at (5,1) {$\mathsf{u_3}$};
\drawbox (s6) at (5,-1) {$\mathsf{l_3}$};
\drawdummy (upExitText) at (5.5,1) {};
\drawdummy (downExitText) at (5.5,-1) {};
\drawdummy (upExit) at (6,1) {};
\drawdummy (downExit) at (6,-1) {};

\draw[->] (init) to (p);
\draw[->]  (p) to [bend right] (s1);
\draw[->]  (p) to [bend right] (s3);
\draw[->]  (s1) to [bend right] (p);
\draw[->]  (s3) to [bend right] (p);

\draw[->]  (s1) to [bend right] (s2);
\draw[->]  (s3) to [bend right] (s4); 

\draw[->]  (s2) to [bend right] (s1);
\draw[->]  (s4) to [bend right] (s3);

\draw[->]  (s2) to [bend right] (s5);
\draw[->]  (s4) to [bend right] (s6);

\draw[->]  (s5) to [bend right] (s2);
\draw[->]  (s6) to [bend right] (s4);

\draw[-]  (s5) to (upExitText);
\draw[-]  (s6) to (downExitText); 

\draw[->]  (upExitText) to node [above, near end] {$\sfrac{1}{2}$}(upExit);
\draw[->]  (downExitText) to node [above, near end] {$\sfrac{2}{5}$}(downExit);

\end{tikzpicture}
\caption{This is our handcrafted scalable model "BigMec" with $N=3$.
	In this model, there are $2 \cdot N +1$ states in the MEC and a dedicated target and sink. 
	The initial state is a Minimizer state, leading to two chains of length $N$ of Maximizer states.
	The fractions on the leaving actions denote the values; they are obtained by a transition to the target state with the given probability and to the sink with the remaining probability.
	The value of every state in the upper chain is $0.5$; the initial state and the lower chain have value $0.4$.}
\label{fig:BigMec}
\end{figure}

\subsection{Detailed Tables of Experimental Results}\label{app:tables}

In this section, we give tables with detailed results for every class of algorithms: Table \ref{tab:vi} for value iteration, Table \ref{tab:si} for strategy iteration and Table \ref{tab:qp} for quadratic/higher order programming.

Every table includes the verification times (in seconds) of several variations of the algorithms, i.e. different optimizations en- or disabled.
An X in the table denotes that the computation did not finish within 15 minutes.
A red background colour indicates that the returned result was wrong, i.e. off by more than the allowed precision. All results that are wrong are off by more than 0.1.

The four left-most columns are shared by all tables. They show the name of the considered case study, its size, the maximum/average number of actions and the number of MECs; for the latter, note that this number excludes trivial MECs, i.e. sink or target states and MECs in regions of the graph that are either not reachable or identified as trivially having value 0/1 by the precomputation.
The case studies are roughly sorted by increasing size/difficulty, with scaled versions of the same model grouped together.

\subsection{Details on the Implementation of Strategy Iteration}\label{app:practicalSI}

In PRISM, even when using strategy iteration to solve the MDP of the opponent, the resulting Markov chain is solved by value iteration.
For all of the models except hm, our solution is still precise, because we detect and fix probabilistic cycles of size 1. 
The model hm is the only one in our benchmark set that has larger probabilistic cycles that cause issues with convergence.

However, this implementation detail does \emph{not} imply that SI is in general not precise. For example, solving the Markov chain with linear programming would result in precise solutions.

\begin{table}[]
	\caption{Table with all experimental results on variations of value iteration. 
	For the general description of the table layout, see Section \ref{app:tables}.
	The considered algorithms are unguaranteed value iteration (VI), the asynchronous simulations-based variant BRTDP~\cite{KKKW18} and bounded value iteration (BVI) as in \cite{KKKW18}.
	As BRTDP is a randomized algorithm, we report both the maximum and median runtime of three tries.
	Considered optimizations for BVI are whether the deflate subroutine is called every loop or only every 100 iterations (1 respectively 100 in the index, suggested in \cite{KKKW18}); and whether to use the topological variant (prepended by T).
	}
\label{tab:vi}
\begin{tabular}{l rrr | rr rrrr}
Case Study       & States & Acts & MECs & VI           & BRTDP & BVI$_1$       & BVI$_{100}$     & TBVI$_1$      & TBVI$_{100}$\\
\toprule

coins            & 19     & 2~$|$~1.16    & 0     & \textless{}1 & \textless{}1         & \textless{}1 & \textless{}1 & \textless{}1 & \textless{}1 \\
prison\_dil      & 102    & 3~$|$~1.34        & 0     & \textless{}1 & \textless{}1         & \textless{}1 & \textless{}1 & \textless{}1 & \textless{}1 \\
adt              & 305    & 4~$|$~1.20        & 0     & \textless{}1 & X                    & \textless{}1 & \textless{}1 & \textless{}1 & \textless{}1 \\
charlton1        & 502    & 3~$|$~1.56        & 0     & \textless{}1 & \textless{}1         & \textless{}1 & \textless{}1 & \textless{}1 & \textless{}1 \\
charlton2        & 502    & 3~$|$~1.56        & 0     & \textless{}1 & \textless{}1         & \textless{}1 & \textless{}1 & \textless{}1 & \textless{}1 \\
cdmsn            & 1,240   & 2~$|$~1.66        & 0     & \textless{}1 & \textless{}1         & \textless{}1 & \textless{}1 & \textless{}1 & \textless{}1 \\
cdmsnMec         & 1,244   & 2~$|$~1.66        & 1     & \textless{}1 & \textless{}1         & \textless{}1 & \textless{}1 & \textless{}1 & \textless{}1 \\
cloud5           & 8,842   & 11~$|$~3.94       & 520   & \textless{}1 & 9 $|$ 6                & \textless{}1 & \textless{}1 & \textless{}1 & \textless{}1 \\
cloud6           & 34,954  & 13~$|$~4.45       & 2176  & \textless{}1 & \textless{}1         & 2            & 2            & 8            & 3            \\
mdsm1            & 62,245  & 2~$|$~1.34        & 0     & 4            & 7 $|$ 7                & 5            & 5            & 3            & 3            \\
mdsm2            & 62,245  & 2~$|$~1.34        & 0     & \textless{}1 & 15 $|$ 14              & \textless{}1 & \textless{}1 & \textless{}1 & \textless{}1 \\
dice20           & 16,915  & 2~$|$~1.45        & 0     & \textless{}1 & 262 $|$ 222            & \textless{}1 & \textless{}1 & \textless{}1 & \textless{}1 \\
dice50           & 96,295  & 2~$|$~1.48        & 0     & 6            & X                    & 6            & 6            & 6            & 6            \\
dice50Mec        & 96,299  & 2~$|$~1.48        & 1     & 6            & X                    & 6            & 6            & 6            & 6            \\
two\_inv   & 172,240 & 3~$|$~1.34        & 0     & 13           & X                    & 14           & 13           & 13           & 13           \\
hw5\_5\_1   & 25,000  & 5~$|$~2.44        & 0     & \textless{}1 & 87 $|$ 18              & \textless{}1 & \textless{}1 & \textless{}1 & \textless{}1 \\
hw5\_5\_2   & 25,000  & 5~$|$~2.44        & 0     & \textless{}1 & \textless{}1         & \textless{}1 & \textless{}1 & \textless{}1 & \textless{}1 \\
hw8\_8\_1   & 163,840 & 5~$|$~2.50        & 0     & 3            & X                    & 3            & 3            & 3            & 3            \\
hw8\_8\_2   & 163,840 & 5~$|$~2.50        & 0     & \textless{}1 & \textless{}1         & \textless{}1 & \textless{}1 & \textless{}1 & \textless{}1 \\
hw10\_10\_1 & 400,000 & 5~$|$~2.52        & 0     & 10           & X                    & 10           & 10           & 10           & 11           \\
hw10\_10\_2 & 400,000 & 5~$|$~2.52        & 0     & \textless{}1 & \textless{}1         & \textless{}1 & \textless{}1 & 1            & 1            \\
AV10\_10\_1      & 106,524 & 6~$|$~2.17        & 0     & \textless{}1 & X                    & \textless{}1 & \textless{}1 & \textless{}1 & \textless{}1 \\
AV10\_10\_2      & 106,524 & 6~$|$~2.17        & 6     & 70           & X                    & 81           & 72           & 82           & 70           \\
AV10\_10\_3      & 106,524 & 6~$|$~2.17        & 1     & 47           & X                    & 46           & 45           & 52           & 55           \\
AV15\_15\_1      & 480,464 & 6~$|$~2.14        & 0     & 1            & X                    & 1            & 1            & 1            & 1            \\
AV15\_15\_2      & 480,464 & 6~$|$~2.14        & 6     & 729          & X                    & X            & X            & X            & X            \\
AV15\_15\_3      & 480,464 & 6~$|$~2.14        & 1     & 492          & X                    & X            & X            & X            & X            \\
hm\_30           & 61     & 1~$|$~1.00        & 0     & \colorbox{red!25}{\textless{}1} & X                    & X            & X            & X            & X            \\
MulMec\_e2    & 302    & 2~$|$~1.99        & 100   & \textless{}1 & X                    & \textless{}1 & 2            & X            & X            \\
MulMec\_e3    & 3,002   & 2~$|$~2.00        & 1000  & 4            & X                    & 36           & 151          & X            & X            \\
MulMec\_e4    & 30,002  & 2~$|$~2.00        & 10000 & \colorbox{red!25}{591}         & X                    & X            & X            & X            & X            \\
BigMec\_e2      & 203    & 2~$|$~1.99        & 1     & \textless{}1 & X                    & \textless{}1 & \textless{}1 & \textless{}1 & \textless{}1 \\
BigMec\_e3      & 2,003   & 2~$|$~2.00        & 1     & \textless{}1 & X                    & 6            & 1            & 6            & 1            \\
BigMec\_e4      & 20,003  & 2~$|$~2.00        & 1     & \colorbox{red!25}{161}          & X                    & 856          & 226          & 871          & 230         \\
\bottomrule
\end{tabular}
\end{table}
\begin{table}[]
		\caption{Table with all experimental results on variations of strategy iteration. 
		For the general description of the table layout, see Section \ref{app:tables}.
		The considered algorithms vary as follows:
		The subscript indicates the method used to solve the MDP, either SI or BVI. The superscript W indicates that the warm start optimization was used (i.e. unguaranteed value iteration to guess a good initial strategy). The prepended T indicates that the topological variant was used.}
	\label{tab:si}
\begin{tabular}{l rrr | rrrr rrrr}
Case Study  & States & Acts    & MECs & SI$_\text{SI}$  & SI$_\text{BV}$       & SI$_\text{SI}^\text{W}$ & SI$_\text{BV}^\text{W}$ & TSI$_\text{SI}$      & TSI$_\text{BV}$      & TSI$_\text{SI}^\text{W}$ & TSI$_\text{BV}^\text{W}$ \\
\toprule
coins       & 19     & 2~$|$~1.16  & 0     & \textless{}1 & \textless{}1 & \textless{}1               & \textless{}1               & \textless{}1 & \textless{}1 & \textless{}1                & \textless{}1                \\
prison\_dil & 102    & 3~$|$~1.34  & 0     & \textless{}1 & \textless{}1 & \textless{}1               & \textless{}1               & \textless{}1 & \textless{}1 & \textless{}1                & \textless{}1                \\
adt         & 305    & 4~$|$~1.20  & 0     & \textless{}1 & \textless{}1 & \textless{}1               & \textless{}1               & \textless{}1 & \textless{}1 & \textless{}1                & \textless{}1                \\
charlton1   & 502    & 3~$|$~1.56  & 0     & \textless{}1 & \textless{}1 & \textless{}1               & \textless{}1               & \textless{}1 & \textless{}1 & \textless{}1                & \textless{}1                \\
charlton2   & 502    & 3~$|$~1.56  & 0     & \textless{}1 & \textless{}1 & \textless{}1               & \textless{}1               & \textless{}1 & \textless{}1 & \textless{}1                & \textless{}1                \\
cdmsn       & 1,240   & 2~$|$~1.66  & 0     & \textless{}1 & \textless{}1 & \textless{}1               & \textless{}1               & \textless{}1 & \textless{}1 & \textless{}1                & \textless{}1                \\
cdmsnMec    & 1,244   & 2~$|$~1.66  & 1     & \textless{}1 & \textless{}1 & \textless{}1               & \textless{}1               & \textless{}1 & \textless{}1 & \textless{}1                & \textless{}1                \\
cloud5      & 8,842   & 11~$|$~3.94 & 520   & \textless{}1 & \textless{}1 & \textless{}1               & \textless{}1               & \textless{}1 & \textless{}1 & \textless{}1                & \textless{}1                \\
cloud6      & 34,954  & 13~$|$~4.45 & 2176  & \textless{}1 & \textless{}1 & \textless{}1               & \textless{}1               & \textless{}1 & \textless{}1 & \textless{}1                & \textless{}1                \\
mdsm1       & 62,245  & 2~$|$~1.34  & 0     & 5            & 5            & 6                          & 6                          & 3            & 3            & 3                           & 3                           \\
mdsm2       & 62,245  & 2~$|$~1.34  & 0     & \textless{}1 & \textless{}1 & \textless{}1               & \textless{}1               & \textless{}1 & \textless{}1 & \textless{}1                & \textless{}1                \\
dice20      & 16,915  & 2~$|$~1.45  & 0     & \textless{}1 & \textless{}1 & \textless{}1               & \textless{}1               & \textless{}1 & \textless{}1 & \textless{}1                & \textless{}1                \\
dice50      & 96,295  & 2~$|$~1.48  & 0     & 7            & 6            & 7                          & 7                          & 6            & 6            & 6                           & 6                           \\
dice50Mec   & 96,299  & 2~$|$~1.48  & 1     & 6            & 7            & 7                          & 7                          & 6            & 6            & 6                           & 6                           \\
two\_inv    & 172,240 & 3~$|$~1.34  & 0     & 38           & 19           & 37                         & 22                         & 11           & 11           & 13                          & 12                          \\
HW5\_5\_1   & 25,000  & 5~$|$~2.44  & 0     & \textless{}1 & \textless{}1 & \textless{}1               & \textless{}1               & \textless{}1 & \textless{}1 & \textless{}1                & \textless{}1                \\
HW5\_5\_2   & 25,000  & 5~$|$~2.44  & 0     & \textless{}1 & \textless{}1 & \textless{}1               & \textless{}1               & \textless{}1 & \textless{}1 & \textless{}1                & \textless{}1                \\
HW8\_8\_1   & 163,840 & 5~$|$~2.50  & 0     & 4            & 4            & 4                          & 4                          & 4            & 4            & 4                           & 4                           \\
HW8\_8\_2   & 163,840 & 5~$|$~2.50  & 0     & \textless{}1 & \textless{}1 & \textless{}1               & \textless{}1               & \textless{}1 & \textless{}1 & \textless{}1                & \textless{}1                \\
HW10\_10\_1 & 400,000 & 5~$|$~2.52  & 0     & 11           & 11           & 11                         & 11                         & 10           & 11           & 11                          & 11                          \\
HW10\_10\_2 & 400,000 & 5~$|$~2.52  & 0     & 2            & 2            & 2                          & 2                          & 1            & 1            & 2                           & 2                           \\
AV10\_10\_1 & 106,524 & 6~$|$~2.17  & 0     & \textless{}1 & \textless{}1 & \textless{}1               & \textless{}1               & \textless{}1 & \textless{}1 & \textless{}1                & \textless{}1                \\
AV10\_10\_2 & 106,524 & 6~$|$~2.17  & 6     & 83           & 79           & 79                         & 80                         & 79           & 76           & 77                          & 79                          \\
AV10\_10\_3 & 106,524 & 6~$|$~2.17  & 1     & 55           & 50           & 58                         & 52                         & 60           & X            & 60                          & X                           \\
AV15\_15\_1 & 480,464 & 6~$|$~2.14  & 0     & 2            & 2            & 2                          & 2                          & 1            & 2            & 2                           & 2                           \\
AV15\_15\_2 & 480,464 & 6~$|$~2.14  & 6     & 797          & 825          & 843                        & 791                        & X            & X            & X                           & X                           \\
AV15\_15\_3 & 480,464 & 6~$|$~2.14  & 1     & 529          & 500          & 512                        & 497                        & X            & X            & X                           & X                           \\
hm\_30      & 61     & 1~$|$~1.00  & 0     & X            & X            & X                          & X                          & X            & X            & X                           & X                           \\
MulMec\_e2  & 302    & 2~$|$~1.99  & 100   & \textless{}1 & X            & \textless{}1               & X                          & \textless{}1 & X            & \textless{}1                & X                           \\
MulMec\_e3  & 3,002   & 2~$|$~2.00  & 1000  & 56           & X            & 56                         & X                          & 3            & X            & 3                           & X                           \\
MulMec\_e4  & 30,002  & 2~$|$~2.00  & 10000 & X            & X            & X                          & X                          & X            & X            & X                           & X                           \\
BigMec\_e2  & 203    & 2~$|$~1.99  & 1     & \textless{}1 & \textless{}1 & \textless{}1               & \textless{}1               & \textless{}1 & \textless{}1 & \textless{}1                & \textless{}1                \\
BigMec\_e3  & 2,003   & 2~$|$~2.00  & 1     & 1            & 1            & 1                          & 1                          & 1            & 1            & 1                           & 1                           \\
BigMec\_e4  & 20,003  & 2~$|$~2.00  & 1     & X            & X            & X                          & X                          & X            & X            & X                           & X     \\
\bottomrule                     
\end{tabular}
\end{table}
\begin{table}[]
	\caption{Table with all experimental results on variations of quadratic/higher order programming. 
		For the general description of the table layout, see Section \ref{app:tables}.
		The considered algorithms include the original method from \cite{DBLP:conf/dimacs/Condon90}, transforming the SG into normal form; and a variant only transforming the SG to a stopping game (\cite{DBLP:conf/dimacs/Condon90}$_\varepsilon$) by adding the $\varepsilon$-transitions if necessary.
		Further, we consider quadratic programming for arbitrary SGs, only transforming the SG to satisfy the $\twoSucc$-constraint (QP), as well as solving a higher-order program for arbitrary SGs (HOP).
		For QP, we differentiate between the used solvers in the subscript (C for CPLEX and G for Gurobi).
		For both QP and HOP, 
		the superscript W indicates that the warm start optimization was used (i.e. unguaranteed value iteration to guess a good initial solution vector), and the prepended T indicates that the topological variant was used.}
	\label{tab:qp}
\begin{tabular}{l rrr| rrrr rrrrr}
Case Study  & States & Acts    & MECs  & \cite{DBLP:conf/dimacs/Condon90} & \cite{DBLP:conf/dimacs/Condon90}$_\varepsilon$      & QP$_\text{C}$        & QP$_\text{G}$        & QP$_\text{G}^\text{W}$ & HOP          & TQP$_\text{G}$       & TQP$_\text{G}^\text{W}$ & THOP         \\ \toprule
coins       & 19     & 2~$|$~1.16  & 0     & X           & X            & \textless{}1 & \textless{}1 & \textless{}1              & \textless{}1 & \textless{}1 & \textless{}1               & \textless{}1 \\
prison\_dil & 102    & 3~$|$~1.34  & 0     & X           & \colorbox{red!25}{8}            & 11           & 9            & 8                         & \textless{}1 & \textless{}1 & \textless{}1               & \textless{}1 \\
adt         & 305    & 4~$|$~1.20  & 0     & X           & \textless{}1 & \textless{}1 & \textless{}1 & \textless{}1              & \textless{}1 & \textless{}1 & \textless{}1               & \textless{}1 \\
charlton1   & 502    & 3~$|$~1.56  & 0     & X           & 179          & X            & 180          & 144                       & \textless{}1 & \textless{}1 & \textless{}1               & \textless{}1 \\
charlton2   & 502    & 3~$|$~1.56  & 0     & X           & X            & X            & X            & X                         & \textless{}1 & X            & X                          & \textless{}1 \\
cdmsn       & 1,240   & 2~$|$~1.66  & 0     & 17          & \textless{}1 & \textless{}1 & \textless{}1 & \textless{}1              & \textless{}1 & \textless{}1 & \textless{}1               & \textless{}1 \\
cdmsnMec    & 1,244   & 2~$|$~1.66  & 1     & X           & \colorbox{red!25}{\textless{}1} & \textless{}1 & \textless{}1 & \textless{}1              & \textless{}1 & \textless{}1 & \textless{}1               & \textless{}1 \\
cloud5      & 8,842   & 11~$|$~3.94 & 520   & X           & X            & X            & X            & X                         & X            & X            & X                          & X            \\
cloud6      & 34,954  & 13~$|$~4.45 & 2176  & X           & X            & X            & X            & X                         & X            & X            & X                          & X            \\
mdsm1       & 62,245  & 2~$|$~1.34  & 0     & X           & X            & X            & X            & X                         & 107          & 6            & 7                          & 3            \\
mdsm2       & 62,245  & 2~$|$~1.34  & 0     & X           & \textless{}1 & 1            & \textless{}1 & \textless{}1              & 5            & \textless{}1 & \textless{}1               & \textless{}1 \\
dice20      & 16,915  & 2~$|$~1.45  & 0     & X           & 3            & X            & 3            & 3                         & 2            & \textless{}1 & \textless{}1               & \textless{}1 \\
dice50      & 96,295  & 2~$|$~1.48  & 0     & X           & X            & X            & X            & X                         & 15           & 6            & 6                          & 6            \\
dice50Mec   & 96,299  & 2~$|$~1.48  & 1     & X           & X            & X            & X            & X                         & 12          & 6            & 7                          & 6            \\
two\_inv    & 172,240 & 3~$|$~1.34  & 0     & X           & X            & \colorbox{red!25}{253}        & X            & X                         & 432          & X            & X                          & 20           \\
HW5\_5\_1   & 25,000  & 5~$|$~2.44  & 0     & 13          & 1            & X            & 1            & 1                         & 3            & \textless{}1 & \textless{}1               & \textless{}1 \\
HW5\_5\_2   & 25,000  & 5~$|$~2.44  & 0     & X           & 1            & X            & 1            & 1                         & 3            & \textless{}1 & \textless{}1               & \textless{}1 \\
HW8\_8\_1   & 163,840 & 5~$|$~2.50  & 0     & 96          & 22           & X            & 20           & 20                        & 17           & 4            & 4                          & 4            \\
HW8\_8\_2   & 163,840 & 5~$|$~2.50  & 0     & X           & 11           & X            & 11           & 11                        & 47           & 2            & 2                          & \textless{}1 \\
HW10\_10\_1 & 400,000 & 5~$|$~2.52  & 0     & X           & 98           & X            & 98           & 98                        & 44           & 12           & 12                         & 11           \\
HW10\_10\_2 & 400,000 & 5~$|$~2.52  & 0     & X           & 48           & X            & 49           & 49                        & 286          & 4            & 4                          & 1            \\
AV10\_10\_1 & 106,524 & 6~$|$~2.17  & 0     & X           & 3            & X            & 3            & 3                         & 9            & \textless{}1 & \textless{}1               & \textless{}1 \\
AV10\_10\_2 & 106,524 & 6~$|$~2.17  & 6     & X           & X            & X            & X            & X                         & X            & X            & X                          & X            \\
AV10\_10\_3 & 106,524 & 6~$|$~2.17  & 1     & X           & X            & X            & X            & X                         & X            & X            & X                          & X            \\
AV15\_15\_1 & 480,464 & 6~$|$~2.14  & 0     & X           & 15           & X            & 10           & 11                        & 41           & 3            & 3                          & 2            \\
AV15\_15\_2 & 480,464 & 6~$|$~2.14  & 6     & X           & X            & X            & X            & X                         & X            & X            & X                          & X            \\
AV15\_15\_3 & 480,464 & 6~$|$~2.14  & 1     & X           & X            & X            & X            & X                         & X            & X            & X                          & X            \\
hm\_30      & 61     & 1~$|$~1.00  & 0     & \colorbox{red!25}{735}         & \textless{}1 & \textless{}1 & \textless{}1 & \textless{}1              & \colorbox{red!25}{\textless{}1} & \textless{}1 & \textless{}1               & \colorbox{red!25}{\textless{}1} \\
MulMec\_e2  & 302    & 2~$|$~1.99  & 100   & X           & X            & X            & \textless{}1 & \textless{}1              & \textless{}1 & \textless{}1 & \textless{}1               & \textless{}1 \\
MulMec\_e3  & 3,002   & 2~$|$~2.00  & 1000  & X           & X            & X            & 4            & 7                         & 3            & 5            & 8                          & 4            \\
MulMec\_e4  & 30,002  & 2~$|$~2.00  & 10000 & X           & X            & X            & X            & X                         & X            & X            & X                          & X            \\
BigMec\_e2  & 203    & 2~$|$~1.99  & 1     & X           & X            & X            & X            & X                         & X            & X            & X                          & X            \\
BigMec\_e3  & 2,003   & 2~$|$~2.00  & 1     & X           & X            & X            & X            & X                         & X            & X            & X                          & X            \\
BigMec\_e4  & 20,003  & 2~$|$~2.00  & 1     & X           & X            & X            & X            & X                         & X            & X            & X                          & X      \\ \bottomrule     
\end{tabular}
\end{table}


}

\end{document}